\documentclass[12pt,onecolumn]{IEEEtran}

\IEEEoverridecommandlockouts
\usepackage{amsmath,amssymb,amsthm,mathrsfs}
\usepackage{graphicx}
\usepackage{cite}
\usepackage{flushend}
\usepackage{url}

\allowdisplaybreaks

\newcommand*{\QEDB}{\hfill\ensuremath{\square}}

\makeatletter%
\if@twocolumn%
\newcommand{\Figwidth}{\columnwidth}%
\def\twocolbreak{\nonumber\\ &}%
\else
\newcommand{\Figwidth}{4.5in}%
\def\twocolbreak{}%
\fi%
\makeatother%

\begin{document}

\title{Ergodic Fading MIMO Dirty Paper and Broadcast Channels: Capacity Bounds and Lattice Strategies}
\author{\IEEEauthorblockN{Ahmed Hindy,~\IEEEmembership{Student member,~IEEE} and 
Aria Nosratinia,~\IEEEmembership{Fellow,~IEEE}}
\thanks{The authors are with the department of Electrical Engineering, The University of Texas at Dallas, Richardson, TX 75080 USA (email: ahmed.hindy@utdallas.edu and aria@utdallas.edu).}
\thanks{The material in this paper was presented in part at the IEEE International Symposium of Information Theory (ISIT), Spain, July 2016~\cite{paper3}, and in part at the IEEE Global Communications Conference (Globecom), Washington D.C., USA, December 2016~\cite{paper4}.}
\thanks{This work was supported in part by the grant ECCS1546969 from the National Science Foundation.}
}

\maketitle




\newtheorem{theorem}{Theorem}
\newtheorem{lemma}{Lemma}
\newtheorem{remark}{Remark}
\newtheorem{corollary}{Corollary}

\def\Pw{P_w}
\def\Px{P_x}
\def\Ps{P_s}
\def\Ph{\sigma_h^2}
\def\SNR{\rho}
\def\Ix{\boldsymbol{I}}
\def\hv{\boldsymbol{h}}
\def\xv{\boldsymbol{x}}
\def\yv{\boldsymbol{y}}
\def\wv{\boldsymbol{w}}
\def\ev{\boldsymbol{e}}
\def\tv{\boldsymbol{t}}
\def\dv{\boldsymbol{d}}
\def\zv{\boldsymbol{z}}
\def\vv{\boldsymbol{v}}
\def\sv{\boldsymbol{s}}
\def\gv{\boldsymbol{g}}
\def\latpoint{\boldsymbol{\lambda}}
\def\SNRm{\boldsymbol{\Psi}}
\def\Hm{\boldsymbol{H}}
\def\Hall{\boldsymbol{H}_d}
\def\Am{\boldsymbol{A}}
\def\Bm{\boldsymbol{B}}
\def\Um{\boldsymbol{U}}
\def\Lm{\boldsymbol{L}}
\def\Gm{\boldsymbol{G}}
\def\Qm{\boldsymbol{Q}}
\def\Vm{\boldsymbol{V}}
\def\Dm{\boldsymbol{D}}
\def\comp{\mathbb{C}}
\def\real{\mathbb{R}}
\def\lat{\Lambda}
\def\voronoi{\mathcal{V}}
\def\genvoronoi{\Omega_1}
\def\codebook{\mathcal{L}_1}
\def\Ex{\mathbb{E}}
\def\vol{\text{Vol}}
\def\gap{\mathcal{G}}
\def\ball{\mathcal{B}}
\def\cov{\Sigma}
\def\covx{\boldsymbol{K_x}}
\def\Cs{\mathcal{C}}
\def\prob{\mathbb{P}}
\def\entropy{\textit{h}}
\def\info{\textit{I}}


\begin{abstract}

A multiple-input multiple-output (MIMO) version of the dirty paper channel is studied, where the channel input and the dirt experience the same fading process and the fading channel state is known at the receiver (CSIR). This represents settings where signal and interference sources are co-located, such as in the broadcast channel.  
First, a variant of Costa's dirty paper coding (DPC) is presented, whose achievable rates are within a constant gap to capacity for all signal and dirt powers. Additionally, a lattice coding and decoding scheme is proposed, whose decision regions are independent of the channel realizations. Under Rayleigh fading, the gap to capacity of the lattice coding scheme vanishes with the number of receive antennas, even at finite Signal-to-Noise Ratio (SNR). 
Thus, although the capacity of the {\em fading} dirty paper channel remains unknown, this work shows it is not far from its dirt-free counterpart.
The insights from the dirty paper channel directly lead to transmission strategies for the two-user MIMO broadcast channel (BC), where the transmitter emits a superposition of desired and undesired (dirt) signals with respect to each receiver. 
The performance of the lattice coding scheme is analyzed under different fading dynamics for the two users, showing that high-dimensional lattices achieve rates close to capacity.
\end{abstract}

\begin{IEEEkeywords}
Dirty paper channel, channels with state, Ergodic capacity, lattice codes, broadcast channels.
\end{IEEEkeywords}


\section{Introduction}
\label{intro}
 
Costa's work on the dirty paper channel~\cite{DPC} has received much interest in the past three decades. Numerous efforts have been made to derive alternative schemes with lower complexity as well as study related channel models.  Weingarten \textit{et al.}~\cite{cap_mimo_bc} proved that Costa's dirty paper coding  achieves the capacity of the MIMO broadcast channel.  Yu \textit{et al.} extended Costa's results to colored dirt. Erez \textit{et al.}~\cite{cap_lat} generalized Costa's work to states drawn from arbitrary sequences, where a lattice coding and decoding scheme was proposed. The dirty paper channel is strongly related to the discrete memoryless channel with non-causal side information at the transmitter, whose capacity is achieved using the Gel'fand-Pinsker random binning scheme~\cite{gelfand_pinsker}.
More recently, variants of the dirty paper channel have been addressed in the literature. Vaze and Varanasi~\cite{DPC_Fading_Varanasi} studied the fading MIMO dirty paper channel with partial channel state information at the transmitter (CSIT), where a scheme was derived that is optimal in the high-SNR limit. For the same setting Bennatan and Burshtein~\cite{Fading_MIMO_BC} proposed a numeric approach that achieves capacity under certain design constraints. The results in~\cite{Fading_MIMO_BC} were also applied to the fading MIMO broadcast channel. For the single-antenna fading dirty paper channel with channel state information at the receiver (CSIR), Zhang \textit{et al.}~\cite{DPC_quasi} showed that a variant of Costa's scheme is optimal at both high and low SNR.%
\footnote{In~\cite{Lattice_MIMO_DPC_C}, Lin \textit{et al.} proposed a version of the lattice coding and decoding scheme in~\cite{DMT_lattices} for the fading MIMO dirty paper channel, where a decoding rule that depends on the channel realizations is used. The proof in~\cite{DMT_lattices}, originally derived for quasi-static MIMO channels, uses the {\em Minkowski-Hlawka Theorem} to prove the existence of a codebook with negligible error probability for a given channel state. However, the existence of a universal codebook that achieves the same error probability over all channel states is not guaranteed and hence the achievable rates in~\cite{Lattice_MIMO_DPC_C} remain under question.} 
Recently, versions of the fading dirty paper channel have been studied where the signal and dirt incur different fading processes. Bergel \textit{et al.}~\cite{dpc_noncausal} proposed a lattice coding scheme for the fading dirty paper channel with non-causal noisy channel knowledge. Also, Rini and Shamai~\cite{fast_dirt} studied the scalar dirty paper channel where only the state is subject to fast fading, and considered various assumptions on channel knowledge. 
Also relevant to this discussion are studies on the ergodic broadcast channel. Li and Goldsmith derived the capacity of the $K$-user ergodic fading BC with CSIT~\cite{BC_ergodic}. For the two-user fading BC with CSIR, Tuninetti and Shamai~\cite{Shamai_BC_C} derived an inner bound for the rate region that is based on random coding and joint decoding at one of the receivers. Tse and Yates~\cite{Tse_BC} derived inner and outer bounds for the capacity, where each is a solution of an integral. Jafarian and Vishwanath~\cite{Vishwanath_BC} computed outer bounds for the case where the fading coefficients are drawn from a discrete distribution. Jafar and Goldsmith~\cite{Goldsmith_BC_CSIR} proved that in the absence of transmitter channel knowledge adding antennas to the transmitter in a BC with isotropic fading does not increase capacity.

This paper addresses the fading MIMO dirty paper channel with CSIR, where the dirt is white, stationary and ergodic. The desired signal and the dirt undergo the same fading state, which represents the case where the sources of the desired signal and interference are co-located. 
We show that dirty paper coding is within a constant gap to ergodic capacity for all SNR and all dirt power. This improves on the result in~\cite{DPC_quasi} since the gap to capacity is computed analytically for all SNR.  Moreover, a lattice coding and decoding scheme is proposed,  where the class of nested lattice codes proposed in~\cite{Erez_Zamir} are used at the transmitter, and the decision regions used are universal for almost all realizations of a given fading distribution.  
It is shown that under Rayleigh fading, the gap to the point-to-point capacity does not depend on the power of either the input signal or the state, and moreover vanishes as the number of receive antennas increases. This result has three crucial implications. First, under certain configurations, faded dirt has an insignificant effect on the capacity of the ergodic fading MIMO channel. This behavior is similar to the non-fading counterpart~\cite{DPC}. Second, the class of nested lattice codes in~\cite{Erez_Zamir} suffices to achieve rates close to capacity. Finally, a decoding rule that does not depend on the channel realizations under ergodic fading approaches optimality under mild conditions.

One advantage of the model under study is its straightforward extension to the broadcast channel with CSIR, where each receiver decodes a signal contaminated by interference stemming from the same source, and hence the desired signal and interference undergo the same fading process. We apply the dirty paper channel results to a two-user MIMO BC with different fading dynamics, where the fading process is stationary and ergodic for one receiver and quasi-static for the other receiver. In addition, the case where both users experience ergodic fading processes that are independent of each other is also studied. Unlike conventional broadcast channel techniques, the proposed scheme does not require any receiver to know the codebooks of the interference signals. Performance is compared with a version of dirty paper coding under non-causal CSIT. For the cases under study, the lattice coding scheme achieves rates very close to dirty paper coding with Gaussian inputs.

The remainder of the paper is organized as follows. In Section~\ref{sec:signal} we establish the necessary background on lattices as well as present the system model. In Section~\ref{sec:binning} an inner bound using a variant of dirty paper coding is presented and its performance is analyzed, whereas an inner bound using lattice coding and decoding is proposed in Section~\ref{sec:ptp_lat}. Application of the results to fading broadcast channels is presented in Section~\ref{sec:BC}. We conclude in Section~\ref{sec:conc}.


\section{Preliminaries}
\label{sec:signal}

\subsection{Notation and Definitions} 
\label{sec:notation}

Throughout the paper we use the following notation. Boldface lowercase letters denote column vectors and boldface uppercase letters denote matrices. The set of real and complex numbers are denoted $\real, \comp$, respectively. $\boldsymbol{A}^T, \boldsymbol{A}^H$ denote the transpose and Hermitian transpose of matrix~$\boldsymbol{A}$, respectively. 
$a_i$ denotes element~$i$ of~$\boldsymbol{a}$. $\boldsymbol{A} \succeq \Bm$ indicates that $\boldsymbol{A}-\Bm$ is positive semi-definite. $\det(\Am)$ and $\text{tr}(\Am)$ denote the determinant and trace of the square matrix~$\Am$, respectively. $\ball_n(q)$ is an $n$-dimensional ball of radius~$q$ and the volume of shape $\mathcal{A}$ is denoted $\vol(\mathcal{A})$. $\Ix_n$ is the size-$n$ Identity matrix. $\prob,\Ex$ denote the probability and expectation operators, respectively. All logarithms are in base~$2$. We define $j=\sqrt{-1}$. $\Gamma(\cdot)$ is the gamma function. $ \kappa^+$ denotes $ \max \{ \kappa,0 \}$. Real and imaginary parts of a complex number~$x$ are shown with superscripts~$x^{(R)}$ and~$x^{(I)}$. Operators $\entropy(\cdot), \, \info(\cdot \, ; \, \cdot)$ represent differential entropy and mutual information in bits, respectively.


\subsection{Lattice Codes} 
\label{sec:lattice}

A lattice $\lat$ is a discrete subgroup of $\real^n$ which is closed under reflection and real addition. 
	The fundamental Voronoi region~$\voronoi$ of the lattice~$\lat$ is the set of points with  
 minimum Euclidean distance to the origin, defined by
\begin{equation}
\voronoi = \big \{ \boldsymbol{s}: \text{arg} \min_{\latpoint \in \lat} || \boldsymbol{s} - \latpoint ||  = \boldsymbol{0} \big \}. 
\label{fund_voronoi}
\end{equation}
 
The second moment per dimension of  $\lat$ is defined as 
\begin{equation}
\sigma_{\lat}^2 = \frac{1}{n \vol(\voronoi)} \int_{\voronoi} ||\boldsymbol{s}||^2 d \boldsymbol{s},
\label{moment}
\end{equation}
and the normalized second moment $\mathit{G}(\lat)$ of $\lat$ is 
\begin{equation}
\mathit{G}(\lat) = \frac{\sigma_{\lat}^2}{\vol^{\frac{2}{n}}(\voronoi)},
\label{normalized_moment}
\end{equation}
where $\mathit{G}(\lat) > \frac{1}{2 \pi e}$ for any lattice in $\real^n$. 
Every $\boldsymbol{s} \in \real^n$ can be uniquely written as $\boldsymbol{s}= \latpoint + \ev$ where $\latpoint \in \lat$, $\ev \in \voronoi$. A lattice quantizer is then defined by
\begin{equation}
Q_{\voronoi}(\boldsymbol{s})= \latpoint \;, \quad \text{if } \boldsymbol{s} \in \latpoint+\voronoi. 
\label{quantizer}
\end{equation}
Define the modulo-$\lat$ operation corresponding to $\voronoi$ as follows
\begin{equation}
[\boldsymbol{s}] \, \text{mod} \lat \triangleq \boldsymbol{s}- Q_{\voronoi}(\boldsymbol{s}).
\label{mod}
\end{equation}
The mod~$\lat$ operation also satisfies
\begin{equation}
\big[ \boldsymbol{s} + \boldsymbol{t} \big] \, \text{mod} \lat = 
\big[ \boldsymbol{s} + [\boldsymbol{t}] \, \text{mod} \lat \big] \, \text{mod} \lat
\hspace{5mm} \forall \boldsymbol{s},\boldsymbol{t} \in \real^n.
\label{mod_mod}
\end{equation}
The lattice $\lat$ is nested in~$\lat_1$ if~$\lat \subseteq \lat_1$. 
We employ the class of nested lattice codes proposed in \cite{Erez_Zamir}. The transmitter constructs a codebook $\codebook = \lat_1 \cap \voronoi$, whose rate
is given by
\begin{equation}
R=\frac{1}{n} \log \Big( \frac{\vol(\voronoi)}{\vol(\voronoi_1) } \Big) \,.
\label{lattice_rate}
\end{equation}

The coarse lattice~$\lat$ has a second moment~$\Px$ and is good for covering and quantization, whereas the fine lattice~$\lat_1$ is good for {\em AWGN coding}, where both are construction-$A$ lattices~\cite{Loeliger,Erez_Zamir}. The existence of such lattices has been proven in~\cite{Lattices_good}.
A lattice $\lat$ is good for covering if 
\begin{equation}
\lim_{n \to \infty} \frac{1}{n} \log \bigg( \frac{\vol(\ball_n(R_c))}{\vol(\ball_n(R_f))} \bigg) =0 \, ,
\label{covering}
\end{equation}
where $R_c$, the covering radius, is the radius of the smallest sphere spanning $\voronoi$ and ${R_f}$ is the radius of the sphere whose volume is equal to $\vol(\voronoi)$. Equivalently , for a good nested lattice code with second moment $\Px$, $\voronoi$ approaches a sphere of radius $\sqrt{n \Px}$. A lattice~$\lat$ is good for quantization if
\begin{equation}
\lim_{n \to \infty} \mathit{G}(\lat) = \frac{1}{2 \pi e} \, .
\label{quantization}
\end{equation}

A key ingredient of the lattice coding schemes proposed in \cite{Erez_Zamir} is common randomness (dither)~$\dv$ at the transmitter. $\dv$ is drawn uniformly over $\voronoi$ and is known at the receiver. Some  important properties of the dither are as follows.
\begin{lemma}\cite[Lemma~1]{Erez_Zamir}
Assume an arbitrary point~$\gv \in \voronoi$ where~$\voronoi$ is the fundamental Voronoi region of a lattice~$\lat$, and a point~$\dv \in \voronoi$  drawn according to a uniform distribution over the region~$\voronoi$. If $\gv$ is independent of $\dv$, then $\xv \triangleq [\gv - \dv] \, \text{mod} \lat$ is also uniformly distributed over $\voronoi$ and independent of~$\gv$. \QEDB
\label{lemma:uniform}
\end{lemma}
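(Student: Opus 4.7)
The plan is to prove the stronger statement that the conditional distribution of $\xv$ given $\gv=g$ is uniform on $\voronoi$ for every fixed $g \in \voronoi$; independence and marginal uniformity then follow immediately.

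First, I would fix $\gv = g$ arbitrarily. Since $\dv$ is uniform on $\voronoi$ and independent of $\gv$, the conditional distribution of $\dv$ given $\gv=g$ is still uniform on $\voronoi$. The task reduces to showing that for any measurable $A \subseteq \voronoi$,
\begin{equation*}
\prob\bigl( [\, g - \dv\,] \,\text{mod}\,\lat \in A \bigr) \;=\; \frac{\vol(A)}{\vol(\voronoi)}.
\end{equation*}
By the definition of the mod-$\lat$ operation, the event $[\,g-\dv\,]\,\text{mod}\,\lat \in A$ is equivalent to $g - \dv \in A + \lat$, i.e.\ to $\dv \in (g - A) + \lat$. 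Since $\dv$ is uniform on $\voronoi$,
\begin{equation*}
\prob\bigl( \xv \in A \mid \gv=g \bigr) \;=\; \frac{\vol\!\bigl(\voronoi \cap ((g - A) + \lat)\bigr)}{\vol(\voronoi)}.
\end{equation*}

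The key step is the measure-preserving property of the mod-$\lat$ map: I would argue that $\vol\bigl(\voronoi \cap ((g-A)+\lat)\bigr) = \vol(A)$. This uses the fact that translates $\{\voronoi + \latpoint : \latpoint \in \lat\}$ tile $\real^n$ up to a measure-zero boundary. Writing $g - A$ as a disjoint union of pieces $\{(g-A) \cap (\voronoi + \latpoint)\}_{\latpoint \in \lat}$, each piece shifts by $-\latpoint \in \lat$ into $\voronoi$, and these shifted pieces are disjoint (again up to measure zero) because the mod-$\lat$ map restricted to any fundamental region is a bijection onto $\voronoi$. Summing the volumes of the shifted pieces recovers $\vol(g-A) = \vol(A)$, giving the claimed equality. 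Substituting yields $\prob(\xv \in A \mid \gv=g) = \vol(A)/\vol(\voronoi)$, independent of $g$.

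Finally, independence follows because the conditional distribution of $\xv$ given $\gv$ does not depend on the value of $\gv$; uniformity of the marginal is then automatic from the same expression. The only delicate point, and the one I would handle carefully, is the tiling/measure-preservation argument in the step above — making sure the decomposition of $g-A$ into lattice-translates of subsets of $\voronoi$ is disjoint modulo a null set, which is where the specific choice of $\voronoi$ as a \emph{fundamental} Voronoi region (as opposed to an arbitrary shifted region) is essential.
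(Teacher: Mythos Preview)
Your argument is correct: conditioning on $\gv=g$, using the tiling property of $\voronoi$ to show the mod-$\lat$ map is measure-preserving on $g-A$ (which lies in a single translate $g-\voronoi$ of a fundamental domain, so distinct points are incongruent mod $\lat$ off a null set), and then deducing independence from the $g$-free conditional law is exactly the standard route.

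There is nothing to compare against in the paper itself: the lemma is quoted from \cite[Lemma~1]{Erez_Zamir} and closed with a \QEDB, with no proof supplied here. One small remark on your final sentence: what the argument actually uses is that $\voronoi$ is a \emph{fundamental domain} (its $\lat$-translates tile $\real^n$); the Voronoi-specific geometry plays no role, and any measurable fundamental region would do.
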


A lattice with optimal quantizer $Q_{\voronoi}^{\text{opt}}$ is one whose dither has the minimum normalized second moment~$\mathit{G}(\lat)$. The dither of such lattice would then have the following property.

\begin{lemma}\cite[Theorem 1]{Lattice_quantization}.
The dither~($\dv_{\text{opt}}$) of a lattice with optimal quantizer of second moment~$\sigma_{\lat}^2$ is white with autocorrelation $\Ex [ \dv_{\text{opt}} \dv_{\text{opt}}^T ]= \sigma_{\lat}^2  \Ix_n$.  \QEDB
\label{lemma:x_distribution}
\end{lemma}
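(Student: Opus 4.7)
The plan is to prove the lemma by a variational argument: if the covariance matrix $K \triangleq \Ex[\dv_{\text{opt}} \dv_{\text{opt}}^T]$ were not a scalar multiple of the identity, I would construct another lattice with the same fundamental volume but strictly smaller per-dimension second moment, contradicting the optimality of the quantizer. Since $\voronoi$ is symmetric about the origin (because $-\lat = \lat$), the uniform distribution on $\voronoi$ has zero mean, so $K$ is indeed the autocorrelation, and by definition $\sigma_{\lat}^2 = \text{tr}(K)/n$.

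Next, for any invertible linear map $\Am$ with $|\det \Am| = 1$, I would consider the lattice $\Am \lat$. The set $\Am \voronoi$ tiles $\real^n$ under translations by $\Am \lat$, so it is a fundamental region of $\Am \lat$, although not in general its Voronoi region. A well-known property of lattice quantization is that among all fundamental regions of a given lattice, the Voronoi region uniquely minimizes the second moment. Combining this with a change of variables $\sv = \Am \boldsymbol{u}$ gives
\begin{equation}
\sigma_{\Am \lat}^2 \le \frac{1}{n \vol(\Am \voronoi)} \int_{\Am \voronoi} \|\sv\|^2 d\sv = \frac{1}{n} \text{tr}\big(\Am^T \Am K\big). \nonumber
\end{equation}
I would then pick $\Am = |\det K|^{1/(2n)} K^{-1/2}$, which is well defined because $K \succ 0$ and satisfies $|\det \Am| = 1$. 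Substitution yields $\text{tr}(\Am^T \Am K) = |\det K|^{1/n} \, \text{tr}(K^{-1} K) = n |\det K|^{1/n}$. Applying the AM-GM inequality to the eigenvalues of $K$ gives $|\det K|^{1/n} \le \text{tr}(K)/n = \sigma_{\lat}^2$, with equality if and only if $K = \sigma_{\lat}^2 \Ix_n$. Hence, assuming $K \neq \sigma_{\lat}^2 \Ix_n$ produces a lattice $\Am \lat$ with $\vol(\Am \voronoi) = \vol(\voronoi)$ and $\sigma_{\Am \lat}^2 < \sigma_{\lat}^2$, so $\mathit{G}(\Am \lat) < \mathit{G}(\lat)$, contradicting the optimality of $Q_{\voronoi}^{\text{opt}}$.

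The main obstacle is the invocation of the fact that the Voronoi region is the second-moment-minimizing fundamental region of a lattice. If a self-contained proof is desired, I would justify this by noting that any fundamental region $\mathcal{F}$ can be reassembled into $\voronoi$ by subtracting an appropriate lattice point from each $\sv \in \mathcal{F}$, namely the nearest lattice point, and the squared-norm integrand is strictly reduced (or unchanged) at each point by this operation, which is precisely the defining property of the Voronoi quantizer. All other steps are routine linear algebra and basic inequalities.
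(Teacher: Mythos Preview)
The paper does not actually prove this lemma; it merely quotes it from the cited reference \cite[Theorem~1]{Lattice_quantization} and closes with a \QEDB. So there is no in-paper proof to compare against.

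That said, your variational argument is correct and is, in essence, the classical proof of Zamir and Feder in the reference the paper cites: linearly deform the lattice by a unit-determinant map that whitens the Voronoi covariance, use that the Voronoi cell is the second-moment-minimizing fundamental region, and apply AM--GM to the eigenvalues of $K$ to obtain a strict improvement in $G(\cdot)$ unless $K$ is scalar. Two minor remarks: (i) you only need the weak inequality ``the Voronoi region minimizes the second moment among fundamental regions,'' since the strictness you exploit comes from AM--GM, not from uniqueness of the Voronoi minimizer; (ii) your justification that $K \succ 0$ is important and is indeed guaranteed because $\voronoi$ has nonempty interior for a full-rank lattice, so the uniform distribution on it cannot be supported on a proper subspace. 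Your closing sketch that the Voronoi region minimizes the second moment---mapping $\sv \in \mathcal{F}$ to $[\sv]\,\mathrm{mod}\,\lat$ and noting $\|[\sv]\,\mathrm{mod}\,\lat\| \le \|\sv\|$---is the standard one and suffices.
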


From~\eqref{quantization}, since the proposed class of lattices is good for quantization, i.e., has minimum normalized second moment asymptotically, the autocorrelation of~$\dv$ approaches that of~$\dv_{\text{opt}}$, implying that $\Ex [ \dv \dv^T ] \to \sigma_{\lat}^2  \Ix_n$ as $n \to \infty$. 
For a more comprehensive review on lattice codes see~\cite{lattice}.


\subsection{System Model} 
\label{sec:system}

Consider a MIMO point-to-point channel with Gaussian noise and $M,N$ antennas at the transmitter and receiver sides, respectively. The fading process is stationary and ergodic, where the random channel matrix is denoted by~$\Hm$. The received signal is impeded not only by Gaussian noise, but also by an undesired signal $\sv$ that experiences the same fading as the desired signal $\xv$, as follows
\begin{equation}
\yv_i=\Hm_i \xv_i + \Hm_i \sv_i + \wv_i,
\label{sig_Rx}
\end{equation}
where the channel coefficient matrices $\Hm_i \in \comp^{N \times M}$ at time $i=1,\ldots,n$ denote realizations of the random matrix~$\Hm$. Moreover,~$\Hm$ is zero mean and isotropically distributed, i.e., $P(\Hm)=P(\Hm \Vm)$ for any unitary matrix $\Vm$ independent of $\Hm$. The unordered eigenvalues of the Hermitian random matrix~$\Hm^H \Hm$, denoted by~$\sigma_1^2, \sigma_2^2, \ldots, \sigma_M^2$, are also random, and their distribution is characterized by the distribution of~$\Hm$.
The receiver has instantaneous channel knowledge, whereas the transmitter only knows the channel distribution. $\xv_i \in \comp^M$ is the transmitted vector at time~$i$, where the codeword
\begin{equation}
\xv \triangleq [\xv_1^H \, \xv_2^H , \ldots \xv_n^H]^H
\label{codeword_MIMO}
\end{equation} 
is transmitted throughout~$n$ channel uses and satisfies $\Ex [ || \xv ||^2 ] \leq n \Px$. 
The noise~$\wv \in \comp^{N n}$ defined by $\wv \triangleq [\wv_1^H , \ldots , \wv_n^H]^H$ is a circularly-symmetric zero-mean i.i.d. Gaussian noise vector with covariance $ \Pw \Ix_{N n} $, and is independent of~$\Hm$.
$\sv \in \comp^{M n}$, where $\sv \in [\sv_1^H , \ldots, \sv_n^H ]^H$  represents the state (dirt) that is independent of $\Hm,\wv$ and is known non-causally at the transmitter. Unless otherwise stated, we assume $\sv$ is a stationary and ergodic sequence whose elements have zero-mean and variance~$\Ps$.

An intuitive outer bound for the rates of the channel in~\eqref{sig_Rx} would be the point-to-point channel capacity in the absence of the state~$\sv$, as follows
\begin{equation}
C \leq  \Ex \Big[ \log \det \big ( \Ix_M +  \frac{\Px}{M \Pw} \Hm^H \Hm \big) \Big].
\label{capacity_outer}
\end{equation}
Had the channel coefficients been known non-causally at the transmitter, the rate in~\eqref{capacity_outer} would have been achieved in a straightforward manner from Costa's result since the new state~$\tilde{\sv}_i \triangleq \Hm_i \sv_i$ would be known at the transmitter~\cite[Chapter 9.5]{network_info}. However, in the present model $\Hm$ is unknown at the transmitter, causing the problem to become more challenging.
In the sequel, two different inner bounds are studied that approach the outer bound in~\eqref{capacity_outer}.%
\footnote{The received signal in~\eqref{sig_Rx} is analogous to that received in a broadcast channel setting, where the desired and interfering signals~$\xv,\sv$ respectively, stem from the same source. This observation is key to understanding the strong connection between the dirty paper and broadcast channels, which is exploited later in Section~\ref{sec:BC}.}


\section{Dirty Paper Coding Inner Bound}
\label{sec:binning}

In this section we aim at deriving a capacity-approaching scheme for the dirty paper channel in~\eqref{sig_Rx}. This channel is a variation of Gel'fand and Pinsker's discrete memoryless channel with a state known non-causally at the transmitter~\cite{gelfand_pinsker},  whose capacity can be expressed by
\begin{equation}
C = \max_{P_{V,X|S}} \info(V;Y,H) - \info(V;S), 
\label{capacity_dmc}
\end{equation} 
where~$S$ represents the state and~$V$ is an auxiliary random variable. $Y$ and $H$ represent the receiver observation and the available CSIR, respectively. Unfortunately the single-letter capacity optimization in~\eqref{capacity_dmc} is not tractable. In~\cite{DPC}, Costa studied the non-fading dirty paper channel with single antenna and additive Gaussian noise, where he showed that  the point-to-point capacity can be achieved and the impact of~$\sv$ can be entirely eliminated. The ingredients of the achievable scheme are using Gaussian codebooks that are correlated with the known ``dirt'' in conjunction with typicality decoding. In the sequel a similar approach is adopted for the fading MIMO dirty paper channel.


\begin{theorem}
For the ergodic MIMO dirty paper channel in~\eqref{sig_Rx} with i.i.d.\ Gaussian dirt, any rate satisfying
\begin{equation}
R \leq  \Big (  \Ex \big[ \log \det {( \frac{\Px}{\Px + M \Ps} \Ix_{M} +  \frac{\Px}{M \Pw} \Hm^H \Hm )} \big]  \Big ) ^+,
\label{capacity_inner_MIMO}
\end{equation}
is achievable.
\label{theorem:ptp_MIMO}
\end{theorem}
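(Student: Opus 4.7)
The plan is to instantiate the Gel'fand--Pinsker random-binning achievable rate $R = \info(V;\yv,\Hm) - \info(V;\sv)$ using a jointly Gaussian auxiliary, exploiting the fact that $\Hm$ is known at the receiver. Because the transmitter has no CSIT, $V$ must be independent of $\Hm$, so the rate decomposes as $\Ex_{\Hm}[\info(V;\yv \mid \Hm)] - \info(V;\sv)$. The crucial design choice is $V$: rather than Costa's classical auxiliary $V = \xv + \alpha\sv$ with an MMSE-inspired $\alpha$ that would have to depend on $\Hm$, I would simply take $\alpha = 1$, that is $V = \xv + \sv$, with $\xv \sim \mathcal{CN}(\boldsymbol{0},(\Px/M)\Ix_M)$ drawn i.i.d.\ and independent of the Gaussian dirt $\sv \sim \mathcal{CN}(\boldsymbol{0},\Ps \Ix_M)$. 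The channel input is then $\xv = V - \sv$, which satisfies the power constraint with equality; this is the ``variant of Costa's scheme'' promised by the theorem.

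The main computation is short. With this choice the received signal collapses to $\yv = \Hm V + \wv$, so the forward channel $V \to (\yv, \Hm)$ is a standard linear Gaussian MIMO channel and $\info(V;\yv \mid \Hm=\boldsymbol{h}) = \log\det(\Ix_M + \boldsymbol{K}_V \boldsymbol{h}^H\boldsymbol{h}/\Pw)$ with $\boldsymbol{K}_V = (\Px/M + \Ps)\Ix_M$. Since $\xv$ and $\sv$ are independent, the binning cost is $\info(V;\sv) = \entropy(V) - \entropy(\xv) = \log\det(\boldsymbol{K}_V \boldsymbol{K}_{\xv}^{-1})$. Merging the two logarithms gives $\log\det(\boldsymbol{K}_{\xv}\boldsymbol{K}_V^{-1} + \boldsymbol{K}_{\xv}\boldsymbol{h}^H\boldsymbol{h}/\Pw)$, and substituting the explicit isotropic covariances yields
\begin{equation*}
\info(V;\yv \mid \Hm=\boldsymbol{h}) - \info(V;\sv) \;=\; \log\det\!\left(\frac{\Px}{\Px + M\Ps}\Ix_M + \frac{\Px}{M\Pw}\,\boldsymbol{h}^H\boldsymbol{h}\right),
\end{equation*}
and the claim follows on taking $\Ex_{\Hm}$ of both sides.

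The $(\cdot)^+$ bracket is justified because zero rate is trivially achievable by not transmitting, so the integrand may be replaced by zero whenever it is negative. Two further technical items are needed: (i) the standard extension of the Gel'fand--Pinsker coding theorem to continuous alphabets with i.i.d.\ state at the encoder and ergodic memoryless side information $\Hm$ at the decoder, analogous to what is used in~\cite{DPC_quasi} for the scalar case; and (ii) the observation that the Gaussian, $\alpha = 1$ assignment is just one particular Gel'fand--Pinsker choice and so only lower-bounds the true supremum, consistent with the ``$\leq$'' in the statement. I do not anticipate any serious obstacle, but the one non-obvious point is the choice $\alpha = 1$: it is precisely the value that makes the forward channel from $V$ to $\yv$ a clean linear Gaussian channel, whereas a seemingly natural MMSE-style scalar such as $\Px/(\Px + M\Ps)$ leaves a residual $\boldsymbol{h}^H\boldsymbol{h}$ cross-term that refuses to cancel. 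Verifying this cancellation first is the step I would actually carry out on scratch paper before writing up the rest.
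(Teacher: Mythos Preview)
Your proposal is correct and follows essentially the same approach as the paper: both instantiate the Gel'fand--Pinsker achievable rate with a Gaussian auxiliary and the choice $\alpha=1$ (the paper writes this as $\Um=\Ix_{Mn}$), which is precisely the step that collapses the received signal to $\yv=\Hm V+\wv$ and makes the forward term a clean MIMO Gaussian mutual information. Your single-letter computation of $\info(V;\yv\mid\Hm)-\info(V;\sv)$ is a tidier version of the paper's $n$-letter differential-entropy calculation followed by the law of large numbers, but the content is identical.
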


\begin{proof} 
We first consider real-valued channels. We follow in the footsteps of the encoding and decoding schemes in~\cite{DPC,DPC_Fading_Varanasi}, where random binning at the encoder and typicality decoding were proposed. Details are as follows.

{\em Encoding:} The transmitted signal of length $Mn$ is in the form $\xv = \vv- \Um \sv$, where $\vv$ is drawn from a codebook consisting of $2^{n \tilde{R}}$ codewords for some $\tilde{R}>0$. The codewords are drawn from a Gaussian distribution with zero mean and covariance~$\frac{\Px}{M} \Ix_{Mn} + \Ps \Um \Um^T$, where  $\Um \in \real^{Mn \times Mn}$ will be determined later. These codewords are randomly assigned to $2^{nR}$ bins for some $0 < R < \tilde{R}$, so that each bin will contain approximately $2^{n(\tilde{R}-R)}$ codewords. As long as~$2^{n (\tilde{R}-R)} > 2^{n \info(V;S)}$, typicality arguments guarantee the existence of a codeword $\vv_0$ in each bin that is jointly typical with the state $\sv_0$, i.e., $\vv_o - \Um \sv_o$ is nearly orthogonal to $\sv_o$~\cite{DPC}. The bin index is chosen according to the message to be transmitted, and from that bin the appropriate codeword is transmitted that is jointly typical with the state. The transmitter emits~$\xv_o \triangleq \vv_o - \Um \sv_o$, which satisfies the power constraint, $\Ex [||\xv||^2] \leq n \Px$ (recall $\xv_o,\sv_o$ are orthogonal).

{\em Decoding:} Given the occurrence of state~$\sv_o$, the received signal is given by
\begin{align}
\yv =& \Hm_d \xv_o + \Hm_d \sv_o + \wv_o \nonumber \\
 =& \Hm_d \vv_o + \Hm_d (\Ix_{Mn}-\Um)  \sv_o + \wv,
\label{dirty_rx_2}
\end{align}
where $\Hm_d \triangleq \text{diag} \{ \Hm_1, \ldots, \Hm_n \}$, and the receiver knows the codebook of~$\vv$. From standard typicality arguments,~$\vv_o$ can be decoded reliably as long as~$2^{n \tilde{R}} < 2^{n \info(V;Y,H)}$ at large~$n$.

{\em Rate analysis:}
Based on the encoding and decoding procedures, the number of distinguishable messages that can be transmitted is equal to the number of bins~$2^{nR}$. The rate can then be analyzed as follows 
\begin{align}
n R <& \, \info(V;Y,H) - \info(V;S)   \nonumber \\
	=& \, \entropy(V) + \entropy(Y|H) - \entropy(Y,V|H) - \entropy(V) + \entropy(V|S)  \nonumber \\
	=& \, \entropy(Y|H) + \entropy(V- \alpha S|S) - \entropy(Y,V|H) \nonumber \\
	=& \, \entropy(Y|H) + \entropy(X) - \entropy(Y,V|H)  \nonumber \\
	=& \, \frac{1}{2}  \log \big( (2 \pi e)^{Mn} \det { (\Qm)} \big) 
	\twocolbreak
	  +\frac{1}{2} \log \big( (2 \pi e)^{Mn} \det {(  \frac{\Px}{M} \Ix_{Mn})} \big)     \nonumber \\
	 & \,  -\frac{1}{2} \log \big( (2 \pi e)^{(M+N)n}  \det {( \Qm)}      \det \big( \frac{\Px}{M} \Ix_{Mn} + \Ps \Um \Um^T    \nonumber \\
				& ~~~   - (\frac{\Px}{M} \Ix_{Mn} + \Ps \Um) \Hall^T  \Qm^{-1} \Hall (\frac{\Px}{M} \Ix_{Mn} + \Ps \Um^T)  \big) \big),
\end{align}
where $\Qm \triangleq  \frac{\Px}{M} \Hall \Hall^T + \Ps \Hall \Hall^T + \Pw \Ix_{Nn} $.
On choosing $\Um= \Ix_{Mn}$,
\begin{equation}
R	< \frac{1}{2n} \sum_{i=1}^n \log \det {\big( \frac{\Px}{\Px + M \Ps} \Ix_{M} + \frac{\Px}{M \Pw} \Hm_i^T  \Hm_i \big)} \Big].  
\label{inner_LLN}
\end{equation}
From the law of large numbers, \eqref{inner_LLN} converges to
\begin{equation}
R	<  \frac{1}{2} \Ex \big[ \log \det {\big( \frac{\Px}{\Px + M \Ps} \Ix_{M} + \frac{\Px}{M \Pw} \Hm^T  \Hm \big)} \big] - \epsilon
\label{rate_LLN}
\end{equation}
 with probability~1, where~$\epsilon$ vanishes as~$n \to \infty$. Finally, the result can be extended to complex-valued channels through the following equivalence
\begin{equation}
\tilde{\yv}_i=\tilde{\Hm}_i \tilde{\xv}_i + \tilde{\Hm}_i \tilde{\sv}_i + \tilde{\wv}_i,
\label{sig_Rx_complex}
\end{equation}
where
\begin{equation}
\tilde{\Hm}_i \triangleq 
\begin{bmatrix}
 \,  \Hm_i^{(R)} \hspace{4mm}  - \Hm_i^{(I)}  \\
 \, \Hm_i^{(I)} \hspace{4mm}  ~~~~  \Hm_i^{(R)}
\end{bmatrix}
\label{channel_complex}
\end{equation}
 is a $2N \times 2M$ channel matrix and $\tilde{\xv}_i \triangleq [ \xv_i^{(R) T} \xv_i^{(I) T} ]^T$ and similarly for $\tilde{\sv}_i,\tilde{\wv}_i$. The rate achieved can then be expressed by~\eqref{capacity_inner_MIMO}.  This concludes the proof of Theorem~\ref{theorem:ptp_MIMO}.
\end{proof}

In the following we bound the gap between the inner and outer bounds in~\eqref{capacity_outer},\eqref{capacity_inner_MIMO}. For ease of exposition we assume~$M \leq N$.

\begin{corollary}
The rate achieved in~\eqref{capacity_inner_MIMO} is within $M$ bits of the capacity.
\label{cor:gap_MIMO}
\end{corollary}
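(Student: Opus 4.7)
My plan is to directly bound the gap
$$\mathcal{G}\triangleq \Ex\bigl[\log\det(\Ix_M+\Am)\bigr]-\bigl(\Ex\bigl[\log\det(\beta\Ix_M+\Am)\bigr]\bigr)^+$$
between the outer bound in~\eqref{capacity_outer} and the achievable rate in~\eqref{capacity_inner_MIMO}, where $\Am\triangleq \tfrac{\Px}{M\Pw}\Hm^H\Hm$ and $\beta\triangleq \tfrac{\Px}{\Px+M\Ps}\in(0,1]$. Since the capacity $C$ is sandwiched between the two bounds, it will suffice to show $\mathcal{G}\leq M$, which then gives $C-R\leq M$ bits.

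First, in the interesting regime where the inner bound is non-negative, I would combine the two log-dets into
$$\Ex\bigl[\log\det\bigl((\Ix_M+\Am)(\beta\Ix_M+\Am)^{-1}\bigr)\bigr]=\Ex\bigl[\log\det\bigl(\Ix_M+(1-\beta)(\beta\Ix_M+\Am)^{-1}\bigr)\bigr],$$
and pass to the (random) eigenvalues $\mu_1,\ldots,\mu_M\geq 0$ of $\Am$, reducing $\mathcal{G}$ to the scalar expression $\Ex\bigl[\sum_i\log\tfrac{1+\mu_i}{\beta+\mu_i}\bigr]$.

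The key per-eigenvalue inequality is that $\tfrac{1+\mu_i}{\beta+\mu_i}\leq 2$ whenever $\mu_i\geq 1-2\beta$, so each ``high-SNR'' eigenmode contributes at most one bit. For $\mu_i<1-2\beta$ the ratio can exceed $2$, but the corresponding outer-bound term $\log(1+\mu_i)<\log(2-2\beta)\leq 1$ is itself bounded by one bit, so the full outer contribution per such eigenmode is also at most one bit. Splitting the $M$ eigenvalues into these two regimes, and using the $(\cdot)^+$ truncation on the inner bound to absorb low-SNR eigenmodes that would otherwise drag the expression negative, one sums the $M$ per-eigenmode contributions to obtain $\mathcal{G}\leq M$ bits.

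The main obstacle I anticipate is the careful interplay between the per-eigenvalue case split and the expectation-level $(\cdot)^+$ truncation: per realization, the ratio $\tfrac{1+\mu_i}{\beta+\mu_i}$ can be as large as $1/\beta=1+M\Ps/\Px$, which blows up with the dirt power, so a purely per-realization bound cannot work. The argument must therefore combine the eigenvalue analysis with the observation that low-SNR eigenmodes contribute only a bounded amount to the outer bound, letting the $(\cdot)^+$ truncation trim away the corresponding inner-bound deficit so that the surviving contribution per eigenmode is at most one bit.
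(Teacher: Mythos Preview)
Your strategy—reduce to the eigenvalues of $\tfrac{\Px}{M\Pw}\Hm^H\Hm$ and split on a magnitude threshold—is exactly the paper's; the paper conditions on whether each eigenvalue exceeds $1$, you on whether $\mu_i\geq 1-2\beta$, and in both versions the high-SNR eigenmodes cleanly contribute at most one bit apiece.

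The genuine gap is the low-SNR case, and your proposed fix does not close it. The $(\cdot)^+$ acts once, on the scalar $\Ex\bigl[\sum_j\log(\beta+\mu_j)\bigr]$; it cannot be distributed across eigenmodes or realizations to cancel the per-mode deficit you identify. Concretely, take $M=1$, $\beta=2^{-10}$, and let $\mu$ equal $0$ or $2^{10}$ each with probability $\tfrac12$: then $\Ex[\log(\beta+\mu)]\approx 0$ so the truncation is inactive, yet the outer bound equals $\tfrac12\log(1+2^{10})\approx 5$ bits, and the gap between~\eqref{capacity_outer} and~\eqref{capacity_inner_MIMO} is about $5$, not $1$. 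Hence any argument that first relaxes $C$ to the outer bound~\eqref{capacity_outer}—which is what both you and the paper do—cannot deliver a uniform $M$-bit bound over all admissible fading laws. The paper's own proof runs into the same wall: in its low-eigenvalue branch it passes from $1-\log\bigl(\tfrac{\Px}{\Px+M\Ps}+\tfrac{\Px}{M\Pw}\sigma_j^2\bigr)$ to $1$, tacitly assuming the argument of the logarithm is at least $1$, which is precisely what fails in the example above. Your instinct that the low-SNR regime is the delicate part is therefore correct; what is missing is a mechanism that actually controls it, and the expectation-level $(\cdot)^+$ alone does not supply one.
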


\begin{proof}
The gap between the capacity outer bound in~\eqref{capacity_outer} and~~\eqref{capacity_inner_MIMO} is bounded by
\begin{align}
\gap \triangleq & \, C-R  \nonumber  \\
\leq &  \Ex \big[ \log \det {( \Ix_{M} +  \frac{\Px}{M \Pw} \Hm^H \Hm )} \big] \, 
\twocolbreak
-  \Big (  \Ex \big[ \log \det {( \frac{\Px}{\Px + M \Ps} \Ix_{M} +  \frac{\Px}{M \Pw} \Hm^H \Hm )} \big] \Big )^+ \nonumber \\
\leq &  \Ex \big[ \log \det {( \Ix_{M} +  \frac{\Px}{M \Pw} \Hm^H \Hm )} \big] \,   
\twocolbreak
   - \, \Ex \big[ \log \det {( \frac{\Px}{\Px + M \Ps} \Ix_{M} +  \frac{\Px}{M \Pw} \Hm^H \Hm )} \big]  \nonumber \\
= & \,   \Ex \big[ \sum_{j=1}^{M}  \log ( 1 +  \frac{\Px}{M \Pw} \sigma_j^2 ) \big] \, 
\twocolbreak
-  \,  \Ex \big[ \sum_{j=1}^{M}  \log ( \frac{\Px}{\Px + M \Ps} +  \frac{\Px}{M \Pw} \sigma_j^2 ) \big]  \label{eq_gap_1} \\
= &   \sum_{j=1}^{M} \Ex_{\sigma_j} \Big[   \log ( 1 +  \frac{\Px}{M \Pw} \sigma_j^2 )  -  \log ( \frac{\Px}{\Px + M \Ps} +  \frac{\Px}{M \Pw} \sigma_j^2 ) \Big]  \nonumber \\
= &   \sum_{j=1}^{M} \bigg( \prob \big(\frac{\Px}{M \Pw} \sigma_j^2 \geq 1 \big) \Ex \Big[   \log ( 1 +  \frac{\Px}{M \Pw} \sigma_j^2 ) 
\twocolbreak
~~ -  \log ( \frac{\Px}{\Px+ M \Ps} +  \frac{\Px}{M \Pw} \sigma_j^2 ) \Big| \frac{\Px}{M \Pw} \sigma_j^2 \geq 1 \Big]    \nonumber \\
 & +  \prob \big(\frac{\Px}{M \Pw} \sigma_j^2 < 1 \big) \Ex \Big[   \log ( 1 +  \frac{\Px}{M \Pw} \sigma_j^2 ) 
\twocolbreak
~~ -  \log ( \frac{\Px}{\Px+ M\Ps} +  \frac{\Px}{M \Pw} \sigma_j^2 ) \Big| \frac{\Px}{M \Pw} \sigma_j^2 < 1 \Big]  \bigg)  \label{eq_gap_2} \\
< &   \sum_{j=1}^{M} \bigg( \prob \big(\frac{\Px}{M \Pw} \sigma_j^2 \geq 1 \big) \Ex \Big[   1 + \log ( \frac{\Px}{M \Pw} \sigma_j^2 )  
\twocolbreak
~~ -  \log ( \frac{\Px}{\Px+ M \Ps} +  \frac{\Px}{M \Pw} \sigma_j^2 ) \Big| \frac{\Px}{M \Pw} \sigma_j^2 \geq 1 \Big]   \nonumber \\
 & +  \prob \big( \frac{\Px}{M \Pw} \sigma_j^2 < 1 \big) \Ex \Big[ \,  1 \,
\twocolbreak
~~ -  \log ( \frac{\Px}{\Px+M \Ps} +  \frac{\Px}{M \Pw} \sigma_j^2 ) \Big| \frac{\Px}{M \Pw} \sigma_j^2 < 1 \Big]  \bigg)  \nonumber \\
< &   \sum_{j=1}^{M} \Big( \prob \big( \frac{\Px}{M \Pw} \sigma_j^2 \geq 1 \big) \Ex \big[  \, 1 \,  \big| \frac{\Px}{M \Pw} \sigma_j^2 \geq 1 \big] 
\twocolbreak
+  \prob \big( \frac{\Px}{M \Pw} \sigma_j^2 < 1 \big) \Ex \big[  \, 1 \,   \big| \frac{\Px}{M \Pw} \sigma_j^2 < 1 \big]  \Big) \, = \, M \, ,
\label{eq:gap_MIMO}
\end{align}
where $\sigma_j^2$ are the eigenvalues of $\Hm^H \Hm$ for $j=1,\ldots, M$, as explained in Section~\ref{sec:system}, and hence~\eqref{eq_gap_1} is an alternative representation of  the expressions in~\eqref{capacity_outer},\eqref{capacity_inner_MIMO} in terms of the channel eigenvalues. \eqref{eq_gap_2} follows from the law of total expectation. The gap to capacity is then shown to be bounded from above by~$M$ bits.
\end{proof}

\begin{remark}
The rates achieved in~\cite[Section IV]{DPC_Fading_Varanasi} were shown to approach capacity at high SNR, i.e., $\gap \to 0$ as $\Px \to \infty$. Meanwhile, Corollary~\ref{cor:gap_MIMO} bounds the gap to capacity within a constant number of bits, irrespective of the values of $\Px,\Ps$ as well as the fading distribution. This result does not contradict with that in~\cite{DPC_Fading_Varanasi}, however. For instance, when $M=N=1$ and $\Pw=1$,  the gap to capacity would be as follows
\begin{align}
\gap \, \leq \, & \Ex \big[ \log ( 1 + \Px |h|^2 ) \big]  -  \Ex \big[ \log ( \frac{\Px}{\Px + \Ps} +  \Px |h|^2 ) \big]    \nonumber \\ 
 = \, & \Ex \big [ \log (1 + \frac{\frac{\Ps}{\Px+\Ps}}{\frac{\Px}{\Px+\Ps} + |h|^2 \Px})  \big] \,  
 < \, \Ex \big[ \log (1 + \frac{1}{|h|^2 \Px})  \big],  
\label{gap_high}
\end{align}
which vanishes as $\Px \to \infty$, confirming the result in~\cite{DPC_Fading_Varanasi}.
\end{remark}

%
%


\section{Lattice coding inner bound} 
\label{sec:ptp_lat}

Although the scheme proposed in Section~\ref{sec:binning} achieves rates that are close to capacity, it has large computational complexity at both the transmitter and receiver since it uses Gaussian codebooks. In this section a lattice coding and decoding scheme is proposed that transmits a dithered lattice codeword and at the receiver uses a single-tap equalizer and Euclidean lattice decoding. In our scheme the use of CSIR is limited to the equalizer; the decision rule does not depend on the instantaneous realizations of the fading channel.

\begin{theorem}
For the ergodic fading MIMO dirty paper channel given in~\eqref{sig_Rx}, any rate 
\begin{equation}
R < \bigg ( -  \log \det \Big( \Ex \big[ \big( \frac{\Px}{\Px+M \Ps} \Ix_{M} + \frac{\Px}{M \Pw} \Hm^H \Hm  \big)^{-1}  \big] \Big) \bigg ) ^+
\label{R_lat}
\end{equation}
is achievable using lattice coding and decoding. 
\label{theorem:ptp_lat}
\end{theorem}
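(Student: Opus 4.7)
The plan is to adapt the modulo-lattice dirty-paper scheme of Erez and Zamir to the ergodic MIMO setting by coupling it with a per-symbol MMSE equalizer that consumes CSIR only at the equalization stage, so that the Euclidean lattice decoder itself has a decision region independent of the fading realizations. The key structural observation is that, since the transmitter cannot pre-subtract $\Hm_i \sv_i$ (not knowing $\Hm_i$), it instead pre-subtracts $\sv_i$ and relies on the receiver's MMSE filter to jointly absorb both the residual channel mismatch and the additive noise; this is natural because $\xv_i$ and $\sv_i$ enter the channel as $\Hm_i(\xv_i + \sv_i) + \wv_i$.

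For the encoding and receiver setup, I would fix a pair of nested lattices $\lat \subseteq \lat_1$ (working on the real representation in~\eqref{sig_Rx_complex}) with $\lat$ good for covering and quantization of second moment $\Px/M$ per complex dimension, so that Lemma~\ref{lemma:x_distribution} gives $\Ex[\xv \xv^H] \to (\Px/M)\Ix_{Mn}$ asymptotically, and $\lat_1$ good for AWGN coding. For a codeword $\tv \in \lat_1 \cap \voronoi$ and a dither $\dv$ uniform on $\voronoi$ shared with the receiver, the transmitter sends $\xv = [\tv - \sv - \dv] \,\text{mod}\, \lat$; by Lemma~\ref{lemma:uniform}, $\xv$ is uniform on $\voronoi$ and independent of $(\tv,\sv)$, so the power constraint $\Ex[\|\xv\|^2] \leq n\Px$ holds. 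The receiver then applies, per symbol, the MMSE estimator of $\xv_i + \sv_i$ from $\yv_i$, namely $\Bm_i = (\tfrac{\Px}{M} + \Ps)\Hm_i^H [(\tfrac{\Px}{M} + \Ps)\Hm_i \Hm_i^H + \Pw \Ix_N]^{-1}$, and forms $\tilde{\yv}_i = [\Bm_i \yv_i + \dv_i] \,\text{mod}\, \lat$. Exploiting $\xv_i + \sv_i + \dv_i \equiv \tv_i \pmod{\lat}$, this collapses to $\tilde{\yv}_i = [\tv_i + \ev_i] \,\text{mod}\, \lat$ with $\ev_i = (\Bm_i \Hm_i - \Ix_M)(\xv_i + \sv_i) + \Bm_i \wv_i$, which is independent of $\tv_i$ by Lemma~\ref{lemma:uniform}, and a direct MMSE identity gives $\mathrm{Cov}(\ev_i \mid \Hm_i) = \tfrac{\Px}{M}(\tfrac{\Px}{\Px + M\Ps}\Ix_M + \tfrac{\Px}{M\Pw}\Hm_i^H \Hm_i)^{-1}$.

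The decoder then performs minimum-Euclidean-distance lattice decoding of $\tilde{\yv}$ against the coset representatives in $\lat_1 \cap \voronoi$; crucially, this decision rule depends only on the lattice, not on $\Hm$. For the rate, the plan is: (i) invoke the ergodic theorem to obtain $\tfrac{1}{n}\sum_{i=1}^n \mathrm{Cov}(\ev_i \mid \Hm_i) \to \Ex[\mathrm{Cov}(\ev_i)]$ almost surely; (ii) use a ``Gaussian is worst'' step to upper-bound the lattice decoder's error probability under $\ev$ by its error probability under a zero-mean Gaussian of matched per-symbol covariance; and (iii) use the goodness of $\lat_1$ for AWGN to conclude that the error probability vanishes whenever $R < M\log(\Px/M) - \log\det \Ex[\mathrm{Cov}(\ev_i)]$. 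Substituting the expression above for $\mathrm{Cov}(\ev_i)$ and rearranging recovers exactly the bound in~\eqref{R_lat}, completed by the $(\cdot)^+$ from the trivial zero-rate scheme.

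The main obstacle is step~(iii): although the per-symbol effective noise covariance is random and time-varying, and the marginal law of $\ev_i$ is non-Gaussian (through the dithered self-interference $(\Bm_i\Hm_i - \Ix_M)\xv_i$), the decoder uses a single fixed decision region. The reason the achievable rate is governed by $\log\det \Ex[\mathrm{Cov}(\ev_i)]$ rather than $\Ex[\log\det \mathrm{Cov}(\ev_i)]$---the latter being precisely the DPC rate of Theorem~\ref{theorem:ptp_MIMO}, strictly larger by Jensen's inequality---is exactly this universality constraint, and rigorously establishing it will call for an ambiguity-decoder argument in the spirit of Erez--Zamir, combined with a concentration inequality for $\tfrac{1}{n}\sum_i \ev_i^H \bar{\Cm}^{-1} \ev_i$ around its mean under $\bar{\Cm} = \Ex[\mathrm{Cov}(\ev_i)]$. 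A secondary technical point is removing any Gaussian assumption on $\sv$: since only the second moment of $\sv$ enters $\mathrm{Cov}(\ev_i)$, the analysis automatically extends to arbitrary stationary ergodic dirt with variance $\Ps$, which is consistent with the hypothesis of the theorem.
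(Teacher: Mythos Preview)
Your construction---pre-subtract $\sv$ (i.e.\ take $\Bm=\Ix$), dither, MMSE-equalize $\xv+\sv$ at the receiver, then run a Euclidean/ambiguity lattice decoder with a channel-independent region---is exactly the paper's, and your effective-noise covariance computation is correct. The only substantive deviation is your step~(ii): the paper does \emph{not} pass through a ``Gaussian is worst'' argument (which would be delicate here since $\ev$ is neither i.i.d.\ across~$i$ nor conditionally Gaussian given the fading block), but instead proves directly, via a term-by-term law-of-large-numbers expansion of $\|\zv\|^2$ (Lemma~\ref{lemma:z_dist} and its proof), that $\zv$ concentrates inside a \emph{sphere} of squared radius $(1{+}\epsilon)\,n\,\mathrm{tr}(\boldsymbol{\bar{\Sigma}})$ with $\boldsymbol{\bar{\Sigma}}=\Ex\big[(\tfrac{1}{\Px/M+\Ps}\Ix_M+\tfrac{1}{\Pw}\Hm^H\Hm)^{-1}\big]$; the Loeliger bound then gives the rate from $\vol(\genvoronoi)/\vol(\voronoi)$, precisely the mechanism you anticipate in your last paragraph.

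One point you omit but should make explicit: the reason a \emph{spherical} region already yields the full $\log\det$ rate, rather than only a trace-based bound, is the isotropic-fading hypothesis of Section~\ref{sec:system}, which forces $\boldsymbol{\bar{\Sigma}}$ to be a scaled identity (this is stated in Lemma~\ref{lemma:z_dist} and argued via unitary invariance in its proof). Your ellipsoidal concentration $\tfrac{1}{n}\sum_i\ev_i^H\bar{C}^{-1}\ev_i$ would work without isotropy and is in fact what the paper uses for the quasi-static user in Theorem~\ref{theorem:BC_lat}; under isotropy the two coincide. A secondary technicality the paper handles and you gloss over is that $\Ex[\xv\xv^T]$ is only asymptotically $\tfrac{\Px}{M}\Ix$, so the concentration proof introduces a perturbed noise $\zv^*$ tied to the covering radius and then shows $\|\zv^*-\zv\|\to 0$.
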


\begin{proof}
We first consider real-valued channels. The encoder is designed as follows

{\em Encoding:} Nested lattice codes are used where $\lat \subseteq \lat_1$. The transmitter emits a signal $\xv$ as follows
\begin{align}
\xv &= [\tv - \Bm \sv - \dv] \text{ mod} \lat,  \nonumber \\
&= \tv + \latpoint  - \Bm \sv - \dv,
\label{sig_tx_lat}
\end{align}
where $\tv \in \codebook$ is drawn from a nested lattice code with $\lat \subseteq \lat_1$, dithered with~$\dv$ which is drawn uniformly over~$\voronoi$, and $\latpoint= - Q_{\voronoi}(\tv - \Bm \sv - \dv ) \in \lat$ from~\eqref{mod}. $\Bm \in \real^{Mn \times Mn}$ will be determined in the sequel.\footnote{Note that~$\Bm$ must be independent of the channel realizations.}
Note from Lemma~\ref{lemma:uniform} that the dither guarantees the independence of $\xv$ from both~$\tv$ and~$\sv$.

{\em Decoding:} The received signal in~\eqref{sig_Rx} undergoes single-tap equalization by~$\Um_d$ and the dither is removed as follows
\begin{align}
\yv' =& \Um_d^T \yv + \dv \nonumber \\
 =& \xv + (\Um_d^T \Hm_d - \Ix_{Mn}) \xv + \Um_d^T \Hm_d \sv + \Um_d^T \wv + \dv \nonumber \\ 
 =& \tv + \latpoint + \zv,
\label{sig_rx_2}
\end{align}
where 
\begin{equation}
\zv \triangleq (\Um_d^T \Hm_d - \Ix_{Mn}) \xv + (\Um_d^T \Hm_d - \Bm) \sv + \Um_d^T \wv
\label{eq_noise}
\end{equation}
is independent of~$\tv$ according to Lemma~\ref{lemma:uniform}. 
For the special case $\Hm_d = \Ix_{Mn}$, the problem reduces to the non-fading dirty paper channel, where the choices $\Um_d= \Bm = \frac{\Px}{\Px + \Pw} \Ix_{Mn}$ are optimal and the point-to-point channel capacity can be achieved via the lattice coding and decoding scheme in~\cite{cap_lat}. However, this  scheme cannot be directly extended to ergodic fading, since the channel realizations are unknown at the transmitter. We choose~$\Bm= \Ix_{Mn}$ so that 
\begin{equation}
\zv \triangleq (\Um_d^T \Hm_d - \Ix_{Mn}) \, (\xv + \sv) + \Um_d^T \wv.
\label{eq_noise_2}
\end{equation}
The motivation behind this choice is aligning the dirt and self-interference terms in~\eqref{eq_noise}. The equalization matrix~$\Um_d$ is designed to minimize~$\Ex \big[ ||\zv||^2 \big]$. $\Um_d$ would then be a block-diagonal matrix whose diagonal block~$i$, $\Um_i$ is given by\footnote{Unlike~\cite{Erez_Zamir,cap_lat}, $\Um_i$ is {\em not} the MMSE equalizer for the channel in~\eqref{sig_Rx}. }
\begin{equation}
\Um_i  = (\frac{\Px}{M}+\Ps)  \big((\frac{\Px}{M}+\Ps) \Hm_i \Hm_i^T + \Pw \Ix_{N} \big)^{-1} \Hm_i.
\label{eq:U_MSE}
\end{equation}

From~\eqref{eq_noise_2} and \eqref{eq:U_MSE}
\begin{equation}
\zv_i = \big ( \Ix_M + \frac{1}{\Pw} (\frac{\Px}{M}+\Ps) \Hm_i^T \Hm_i  \big)^{-1} \, (\xv_i + \sv_i) 
 + (\frac{\Px}{M}+\Ps) \Hm_i^T ( (\frac{\Px}{M}+\Ps) \Hm_i \Hm_i^T + \Pw \Ix_{N} )^{-1} \wv_i.
\label{eq:zi}
\end{equation}

Naturally, the distribution of $\zv$ conditioned on $\Hm_i$ (which is known at the receiver) varies across time.  This variation produces complications, so in order to simplify the decoding process,  we ignore the instantaneous channel knowledge at the decoder following the equalization step, i.e., after equalization the receiver considers $\Hm_i$ a random matrix.


\begin{lemma}
\label{lemma:z_dist}

For any $\epsilon>0$ and $\gamma>0$, there exists $n_{\gamma,\epsilon}$ such that for all $n>n_{\gamma,\epsilon}$, %

\begin{equation}
   \prob \big( \zv \notin \genvoronoi \big) < \gamma,
\label{eq:error_event}
\end{equation}
where $\genvoronoi$ is the following sphere
\begin{align}
\genvoronoi \triangleq \Big\{ \vv & \in \real^{M n} \, : \, ||\vv||^2 \leq  (1+\epsilon)  n \, 
 \text{tr} \big( \Ex \big [ \boldsymbol{\bar{\cov}} ] \big)  \Big\}, 
\label{voronoi_MIMO}
\end{align} 
and $\boldsymbol{\bar{\cov}} \triangleq \Ex \Big [ \big( \frac{1}{\frac{\Px}{M}+\Ps} \Ix_{M} + \frac{1}{\Pw} \Hm^T \Hm \big)^{-1} \Big ]$ is a scaled identity matrix.
\end{lemma}

\begin{proof}
See Appendix~\ref{appendix:z_dist}.
\end{proof}

We apply a version of the ambiguity decoder proposed in~\cite{Loeliger} with a spherical decision region~$\genvoronoi$ in~\eqref{voronoi_MIMO}. 
 The decoder decides $\hat{\tv} \in \lat_1$ if and only if the received point falls exclusively within the decision region of~$\hat{\tv}$, i.e., $\yv' \in \hat{\tv} + \genvoronoi$.

{\em Probability of error:} As shown in~\cite[Theorem~4]{Loeliger}, on averaging over the set of all fine lattices~$\Cs$ of rate~$R$ whose construction follows that in Section~\ref{sec:lattice}, the probability of error can be bounded by 
\begin{align}
\frac{1}{|\Cs|} \sum_{\Cs_i \in \Cs} \, \prob_e <& \, \prob (\zv \notin \genvoronoi) + (1+ \delta) \, \frac{\vol(\genvoronoi)}{\vol(\voronoi_1)} \nonumber \\
=& \, \prob (\zv \notin \genvoronoi) + (1+ \delta) 2^{nR} \, \frac{\vol(\genvoronoi)}{\vol(\voronoi)}, 
\label{error_prob}
\end{align}
for any $\delta > 0$, and the equality follows from~\eqref{lattice_rate}. This is a union bound on two events: the received vector residing outside the correct decision region, or in the intersection of two distinct decision regions, i.e.,
$\big\{\yv' \in \{ \tv_1 + \genvoronoi \} \cap \{ \tv_2 + \genvoronoi
\} \big\}$. 
From Lemma~\ref{lemma:z_dist}, the first term in~\eqref{error_prob} vanishes with~$n$.  
Define~$\SNRm \triangleq \frac{\Px}{M} \boldsymbol{\bar{\cov}}^{-1}$.
The volume of~$\genvoronoi$ is then given by
\begin{equation}
\vol(\genvoronoi) = (1+ \epsilon)^{\frac{M n}{2}} 
 \vol \big( \ball_{M n} (\sqrt{ n \Px}) \big) \,  \det \big ( \SNRm^{\frac{-1}{2}} \big) .
\label{vol_omega_MIMO}
\end{equation}

The second term in~\eqref{error_prob} is bounded by
\begin{align}
& (1+ \delta) 2^{nR} (1+ \epsilon)^{M n/2} \frac{\vol(\ball_{M n} (\sqrt{ n \Px}))}{\vol(\voronoi)} 
 \, \det \big ( \SNRm^{\frac{-1}{2}} \big ) \,  \twocolbreak
= \, (1+ \delta) 2^{ - M n \Big ( - \frac{1}{M n} \log \big( \frac{\vol(\ball_{M n} (\sqrt{ n \Px}))}{\vol(\voronoi)} \big) + \xi \Big ) }, 
\label{eq:exponential_MIMO}
\end{align}
where 
\begin{align}
\xi \, \triangleq & \,  \frac{-1}{2} \log({1+ \epsilon})  - \frac{1}{M} R -  \frac{1}{2 M n} \log \det ( \SNRm^{-1} )    \nonumber \\
= & \, \frac{-1}{2} \log({1+ \epsilon})   - \frac{1}{M} R  
\twocolbreak
- \frac{1}{2 M}  \log \det \Big( \Ex \big[ \big( \frac{\Px}{\Px+M \Ps} \Ix_{M} + \frac{\Px}{M \Pw} \Hm^T \Hm  \big)^{-1}  \big] \Big).
 \label{xi_MIMO}
\end{align}
From~\eqref{covering}, since the lattice $\lat$ is good for covering, the first term of the exponent in~\eqref{eq:exponential_MIMO} vanishes. From~\eqref{eq:exponential_MIMO}, $\prob_e  \to 0$ as $n \to \infty$ if $\xi>0$, which is satisfied when
\begin{align*}
R < \, & \frac{-1}{2}  \log \det \Big( \Ex \big[ \big( \frac{\Px}{\Px+ M \Ps} \Ix_{M} + \frac{\Px}{M \Pw} \Hm^T \Hm  \big)^{-1}  \big] \Big)
\twocolbreak
- \frac{1}{2} \log({1+ \epsilon}) - \epsilon',
\end{align*}
where~$\epsilon,\epsilon'$ vanish with~$n$.
The existence of at least one lattice~$\Cs_i \in \Cs$ that achieves~\eqref{error_prob} is straightforward. For the coarse lattice, any covering-good lattice from~$\Cs$ with second moment~$\Px$ can be picked, e.g., a pair of self-similar lattices can be used for the nested lattice.
In the event of successful decoding, from~\eqref{sig_rx_2} the outcome of the decoding process would be  $\hat{\tv} = \tv + \latpoint$. On applying the modulo-$\lat$ operation on~$\hat{\tv}$,
\begin{equation}
[\hat{\tv}] \, \text{mod}\lat \, = \, [\tv+\latpoint] \, \text{mod}\lat \, = \, \tv,
\label{mod_lambda}
\end{equation}
where the second equality follows from~\eqref{mod_mod} since~$\latpoint \in \lat$. This concludes the proof for real-valued channels.

For complex-valued channels, we follow in the footsteps of~\cite[Theorem 2]{paper1}, and hence only a sketch of the proof is provided. With a slight abuse of notation, we denote the complex-valued elements by a superscript~$^{\sim} \,$.

{\em Encoding:} Since the channel is complex-valued, {\em two} independent codewords are selected from the same nested lattice code~$\tilde{\lat}_1 \supseteq \tilde{\lat}$ where~$\tilde{\lat}$ has a second moment~$\Px/2$. The transmitted signal is the combination of the two lattice codewords
\begin{equation}
\tilde{\xv} = \big[\tilde{\tv}^{(R)} - \tilde{\sv}^{(R)} - \tilde{\dv}^{(R)} \big] \text{mod} \tilde{\lat} 
+ j \big[\tilde{\tv}^{(I)} - \tilde{\sv}^{(I)} - \tilde{\dv}^{(I)} \big] \text{mod} \tilde{\lat},
\label{sig_tx_complex}
\end{equation}
with $\Ex \big[ ||\tilde{\xv}||^2 \big] \leq n\Px$, and the dithers $\tilde{\dv}^{(R)},\tilde{\dv}^{(I)}$ are independent. 

{\em Decoding:} The equalization matrix at time~$i$ is given by
\begin{equation}
\tilde{\Um}_i= (\frac{\Px}{M}+\Ps) \big((\frac{\Px}{M}+\Ps) \tilde{\Hm}_i \tilde{\Hm}_i^H + \Pw \Ix_{M n} \big)^{-1} \tilde{\Hm}_i.
\label{eq:ui_MSE_complex}
\end{equation}

Following  MMSE equalization and dither removal similar to~\eqref{sig_rx_2}, the real and imaginary equivalent channels at the receiver are as follows
\begin{equation}
\tilde{\yv}'^{(R)} = \tilde{\tv}^{(R)} + \boldsymbol{\lambda'}  + \tilde{\zv}^{(R)}  ~~ , ~~
\tilde{\yv}'^{(I)} = \tilde{\tv}^{(I)} + \boldsymbol{\lambda''} + \tilde{\zv}^{(I)} ,
\label{sig_rx_complex}
\end{equation}
where $\boldsymbol{\lambda'}$,$\boldsymbol{\lambda''} \in \tilde{\lat}$. $\tilde{\zv}^{(R)},\tilde{\zv}^{(I)}$ are the equivalent noise components over the real and imaginary channels, respectively. 
Hence, the complex-valued channel is transformed to two parallel real-valued channels, where the decoder recovers the lattice points~$\tilde{\tv}^{(R)}$ and~$\tilde{\tv}^{(I)}$ independently over the real and imaginary domains using the following decision regions
\begin{align}
\genvoronoi^{(R)} = \genvoronoi^{(I)} \triangleq \bigg\{ & \vv \in \real^{M n} \, : \, ||\vv||^2 \leq (1+\epsilon) \frac{n}{2} \,
\twocolbreak
\text{tr} \Big( \Ex \Big [ \big( \frac{1}{\frac{\Px}{M}+\Ps} \Ix_{M} + \frac{1}{\Pw} \tilde{\Hm}^H \tilde{\Hm} \big)^{-1} \Big ]  \Big)\bigg\} \, .
\label{voronoi_MIMO_complex}
\end{align} 
The rates achieved on each channel are then given by
\begin{align}
& \tilde{R}^{(R)} = \tilde{R}^{(I)} < 
\twocolbreak
 \bigg ( \frac{-1}{2} \log \det \Big( \Ex \big[ \big( \frac{\Px}{\Px+ M \Ps} \Ix_{M} + \frac{\Px}{M \Pw} \tilde{\Hm}^H \tilde{\Hm}  \big)^{-1}  \big] \Big) \bigg ) ^+, 
\label{R_lat_complex_per}
\end{align}
 and $\tilde{R} \triangleq \tilde{R}^{(R)} + \tilde{R}^{(I)}$ is the achievable rate for the complex-valued channel given in~\eqref{R_lat}.
\end{proof}

\begin{remark}
It was shown  that for any rate satisfying~\eqref{R_lat}, the received point falls with high probability within the decision region~\eqref{voronoi_MIMO} of the transmitted lattice point. The fact that the decision region is spherical implies that the true lattice point has the shortest Euclidean distance to the received point. This indicates that the Euclidean lattice decoder 
\begin{equation}
\hat{\tv}= \big [ \text{arg} \min_{\tv' \in \lat_1} ||\yv'- \tv'||^2 \big ] \text{ mod} \lat 
\label{ML_eqn_simple}
\end{equation} 
yields similar performance to that of the ambiguity decoder. However, unlike the lattice decoders in~\cite{DMT_lattices,Lattice_MIMO_DPC_C}, the decoding rule in~\eqref{ML_eqn_simple} is independent of the channel realizations and hence a universal codebook and decoder are guaranteed for all realizations of a given fading distribution.
\end{remark}

We compute bounds on the gap between the outer and inner bounds of the capacity in~\eqref{capacity_outer} and~\eqref{R_lat}. 
For convenience we define the SNR per transmit antenna $\SNR \triangleq \frac{\Px}{M \Pw}$.

\begin{corollary}
For the fading dirty paper channel in~\eqref{sig_Rx}, the gap between the lattice coding rate~\eqref{R_lat} and the capacity outer bound~\eqref{capacity_outer} for any $\Ps>0$ is upper bounded by
\begin{itemize}
\item $N \geq M$ and $\SNR \geq 1$: For any channel for which all elements of $\Ex \big[ ( \Hm^H \Hm  )^{-1} \big] < \infty$ 
\begin{equation}
\gap < \log  \det \Big( \big( \Ix_M + \Ex [ \Hm^H \Hm ] ) \, \Ex \big[  ( \Hm^H \Hm  )^{-1}  \big] \Big). 
\label{gap_NtNr_eq}
\end{equation}
\item $N > M$ and $\SNR \geq 1$: When the elements of~$\Hm$ are i.i.d. complex Gaussian with zero mean and unit variance, 
\begin{equation}
\gap < M \, \log \big( 1+\frac{M+1}{N-M} \big).
\label{gap_NtNr_uneq}
\end{equation}
\item $N=M=1$: Under Nakagami-$m$ fading with~$m>1$ and $\Ex [|h|^2]=1$,
\begin{equation}
\gap < 1 + \log \big ( 1 + \frac{1}{m-1} \big ).
\label{gap_SNRnak}
\end{equation}
\item $N=M=1$: Under Rayleigh fading with $\Ex [|h|^2]=1$, 
\begin{equation}
\gap < 1.48 + \log \big ( \log (1+ \kappa) \big ),
\label{gap_SNRgaus}
\end{equation}
where~$\kappa \triangleq \max \{ \frac{\Px}{\Pw} , \frac{\Ps}{\Pw} , 1 \}$.
\end{itemize}
\label{cor:gap}
\end{corollary}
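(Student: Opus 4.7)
The plan is to start from the gap decomposition
\[
\gap \;\leq\; \Ex\!\big[\log\det(\Ix_M + \SNR\,\Hm^H\Hm)\big] + \log\det\Ex\!\Big[\big(\tfrac{\Px}{\Px+M\Ps}\Ix_M + \SNR\,\Hm^H\Hm\big)^{-1}\Big],
\]
handle the first three bullets by a common matrix argument, and then treat Rayleigh separately because $\Ex[(\Hm^H\Hm)^{-1}]$ is infinite there.

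\textbf{A unified matrix argument.} For the capacity term I would apply Jensen's inequality (concavity of $\log\det$ on positive-definite arguments) to get $\Ex[\log\det(\Ix_M+\SNR\,\Hm^H\Hm)] \leq \log\det(\Ix_M+\SNR\,\Ex[\Hm^H\Hm])$, then use $\Ix_M+\SNR\,\Ex[\Hm^H\Hm]\preceq \SNR\,(\Ix_M+\Ex[\Hm^H\Hm])$ (valid for $\SNR\geq 1$) to pull out $M\log\SNR$. For the lattice term I would apply the operator inequality $\tfrac{\Px}{\Px+M\Ps}\Ix_M + \SNR\,\Hm^H\Hm \succeq \SNR\,\Hm^H\Hm$, which inverts to $(\cdot)^{-1}\preceq \SNR^{-1}(\Hm^H\Hm)^{-1}$; taking expectation and $\log\det$ then contributes $-M\log\SNR + \log\det\Ex[(\Hm^H\Hm)^{-1}]$. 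The $\pm M\log\SNR$ pieces cancel, leaving exactly \eqref{gap_NtNr_eq}. Bullets (ii) and (iii) then drop out by substituting moment identities into \eqref{gap_NtNr_eq}: for i.i.d.\ complex Gaussian entries, $\Ex[\Hm^H\Hm]=N\Ix_M$ and the inverse-Wishart expectation gives $\Ex[(\Hm^H\Hm)^{-1}]=\tfrac{1}{N-M}\Ix_M$ (requiring $N>M$), yielding $M\log\tfrac{N+1}{N-M}=M\log(1+\tfrac{M+1}{N-M})$; for Nakagami-$m$ with $m>1$, a direct gamma integral gives $\Ex[1/|h|^2]=\tfrac{m}{m-1}$, so \eqref{gap_NtNr_eq} specializes to $\log(2\cdot\tfrac{m}{m-1})=1+\log(1+\tfrac{1}{m-1})$.

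\textbf{The Rayleigh case.} The operator step fails since $\int_0^\infty e^{-g}/g\,dg=\infty$, so I would work directly with the exponential-integral closed form
\[
\Ex\!\big[(a+\SNR g)^{-1}\big] = \SNR^{-1}\,e^{a/\SNR}\,E_1(a/\SNR), \qquad a=\tfrac{\Px}{\Px+\Ps}\in(0,1],
\]
with $g=|h|^2\sim\mathrm{Exp}(1)$. The sharp inequality $e^{x}E_1(x)\leq \ln(1+1/x)$ — provable by writing both sides as $\int_0^\infty w(s)\mu(ds)$ with $w(s)=1/(x+s)$ and observing that the signed measure $\mathbf{1}_{[0,1]}(s)\,ds-e^{-s}\,ds$ has zero total mass with positive part supported on $[0,1]$ where $w$ is largest — yields $\Ex[(a+\SNR g)^{-1}]\leq \SNR^{-1}\ln(1+\SNR/a)$, hence $R\geq \log\SNR-\log\ln(1+\SNR/a)$. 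Combining with the Jensen capacity bound $C\leq\log(1+\SNR)\leq \log\SNR+1$ for $\SNR\geq 1$, the $\log\SNR$ pieces cancel and I obtain $\gap\leq 1+\log\ln(1+\SNR/a)=1+\log\ln 2+\log\log(1+\SNR/a)\approx 0.47+\log\log(1+\SNR/a)$. Finally, $\SNR/a=\Px/\Pw+\Ps/\Pw\leq 2\kappa\leq(1+\kappa)^2$, giving $\log\log(1+\SNR/a)\leq 1+\log\log(1+\kappa)$ and hence~\eqref{gap_SNRgaus}.

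\textbf{Main obstacle.} The first three bullets are routine bookkeeping once the operator inequality and moment formulas are in hand. The real work is the Rayleigh case: obtaining the $\log\log$ scaling rather than $\log$ requires the sharp bound $e^{x}E_1(x)\leq \ln(1+1/x)$ (a looser $E_1(x)\leq\ln(1+1/x)$ loses the $e^{a/\SNR}$ factor and inflates the constant well past $1.48$), paired with a capacity bound sharp enough that the leading $\log\SNR$ actually cancels. Tracking the $a$-dependence through to the three-way definition of $\kappa$ so that the argument of $\log\log$ is $1+\kappa$, and verifying that the $(\cdot)^+$ in~\eqref{R_lat} is never binding in the stated regime, are the last bookkeeping points.
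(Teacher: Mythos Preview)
Your proposal is correct and follows essentially the same route as the paper: Jensen on the capacity term, the operator inequality $\tfrac{\Px}{\Px+M\Ps}\Ix_M+\SNR\Hm^H\Hm\succeq\SNR\Hm^H\Hm$ on the lattice term, substitution of the Wishart and Nakagami inverse moments, and for Rayleigh the exponential-integral identity combined with the sharp bound $e^{z}E_1(z)\le\ln(1+1/z)$ followed by $1+2\kappa\le(1+\kappa)^2$. The paper merely cites the $E_1$ inequality from Abramowitz--Stegun, whereas your signed-measure argument is a self-contained proof; that is a genuine plus.

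One small omission: bullets (iii) and (iv) in the statement carry no $\SNR\ge 1$ restriction, but your derivations for both explicitly assume $\SNR\ge 1$. The paper closes this by the one-line observation that for $\SNR<1$ one has $\gap<C\le\log(1+\SNR\,\Ex[|h|^2])<1$, which is already below the claimed constants. You should add this sentence. Relatedly, rather than ``verifying that the $(\cdot)^+$ is never binding,'' it is simpler (and what the paper does) to use $(x)^+\ge x$, i.e., $C-(R_{\mathrm{expr}})^+\le C-R_{\mathrm{expr}}$, which sidesteps the question entirely.
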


\begin{proof}
See Appendix~\ref{appendix:gap}.
\end{proof}

Under Rayleigh fading with $M=N=1$ the gap expression in~\eqref{gap_SNRgaus} varies with~$\Px,\Ps$. However, it can be shown that $\lim_{\Px \to \infty} \frac{\gap}{C} \to 0$ for any fixed ratio~$\frac{\Px}{\Ps} \, $. Nevertheless, when $M<N$, $\gap$ is independent of~$\Px,\Ps \, $, and also vanishes when $N \gg M$ even at finite~$\Px$. This result implies that lattice coding and decoding- along with a channel independent decision rule- approach the capacity of the Rayleigh-fading MIMO channel with $N>M$.

 In Fig.~\ref{fig:gap_MIMO}, the bound on the gap to capacity is plotted for different antenna configurations, which holds for all $\SNR \geq 1$. The gap vanishes when~$N \gg M$. For the square MIMO channel with~$M=N=2$, lattice coding rates are compared with the DPC rates in~\eqref{capacity_inner_MIMO} as well as the capacity outer bound in~\eqref{capacity_outer}, and the gap to capacity of the lattice scheme are plotted in Fig.~\ref{fig:rates_2x2}. Simulation results for the single-antenna case are also provided under Nakagami-$m$ fading with $m=2$ in Fig.~\ref{fig:gap_Nakagami} and under Rayleigh fading in Fig.~\ref{fig:gap_Rayleigh}.

\begin{figure}
\centering
\includegraphics[width=\Figwidth]{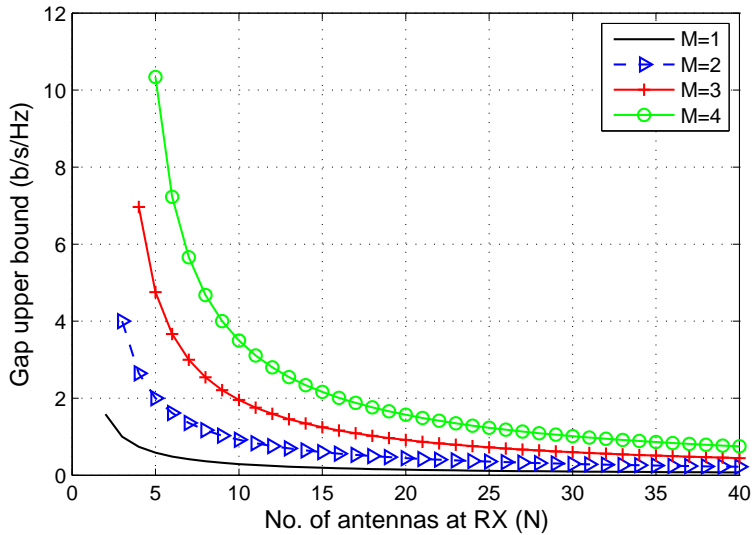}
\caption {The gap expression in~\eqref{gap_NtNr_uneq} vs. $N$, at $M=1,2,3$ and~$4$, for all~$\SNR \geq 1$.} 
\label{fig:gap_MIMO} 
\end{figure}

\begin{figure*}
\centering
\includegraphics[width=1\textwidth]{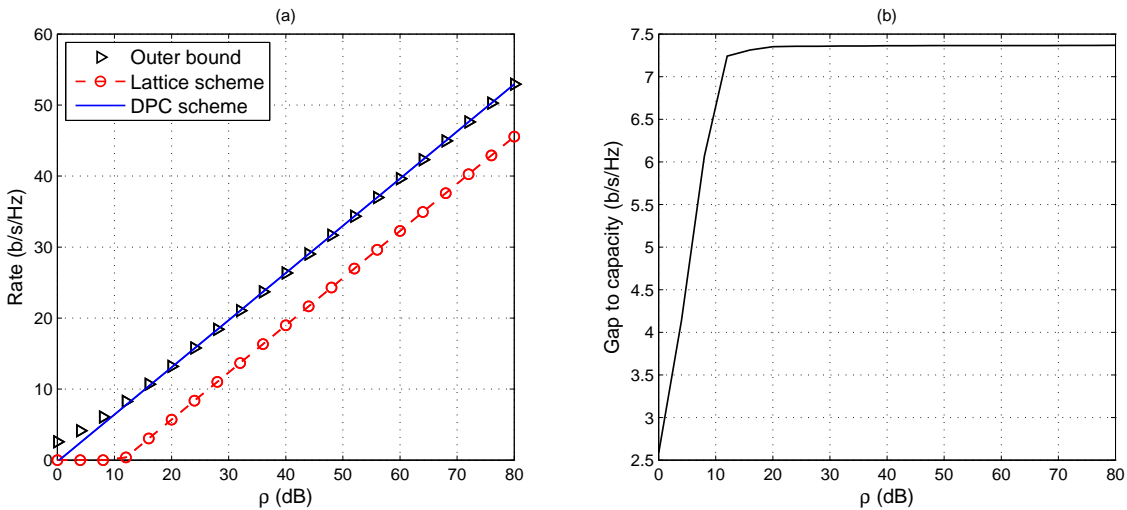}
\caption {\footnotesize{(a) Rates achieved by lattice codes vs. DPC vs. outer bound under Rayleigh fading with $M=N=2$ and $\Ps =80 \, \text{dB}$. (b)~Gap to capacity of the lattice scheme.} 
\label{fig:rates_2x2} } 
\end{figure*}

\begin{figure}
\centering
\includegraphics[width=\Figwidth]{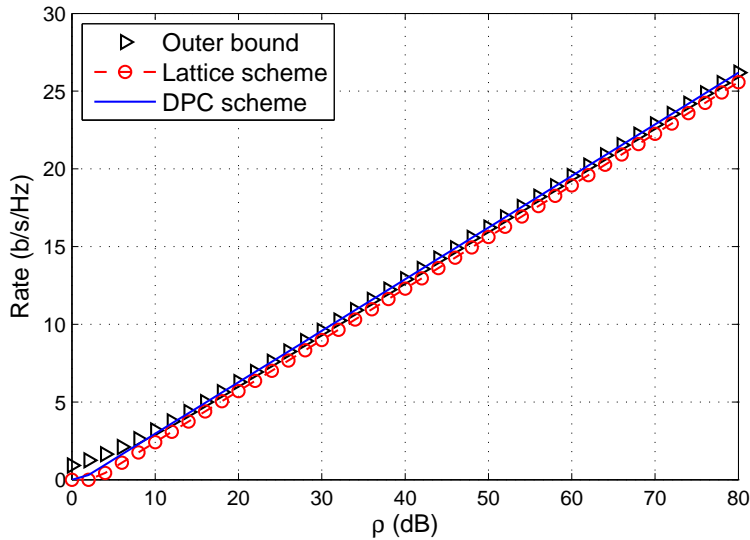}
\caption {Rates achieved using lattice codes vs. DPC vs. outer bound under Nakagami-$m$ fading with $m=2$ and $\Ps =80 \, \text{dB}$.} 
\label{fig:gap_Nakagami} 
\end{figure}

\begin{figure}
\centering
\includegraphics[width=\Figwidth]{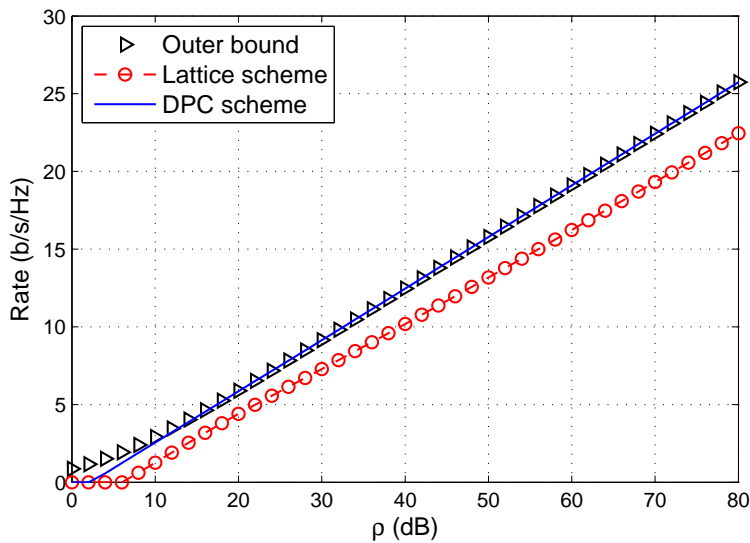}
\caption {Rates achieved using lattice codes vs. DPC vs. outer bound under Rayleigh fading and $\Ps =80 \, \text{dB}$.} 
\label{fig:gap_Rayleigh}  
\end{figure}


\section{Fading Broadcast Channel} 
\label{sec:BC}

We first consider a two-user broadcast channel where the channel coefficients of Receiver~1 are quasi-static, and that of Receiver~2 are stationary and ergodic. The transmitter and the two receivers have $M, N_1, N_2$ antennas, respectively.  The received signals are given by
\begin{align}
\yv_{1,i} = & \Gm \xv_{1,i} + \Gm \xv_{2,i} +\wv_{1,i}    \nonumber \\
\yv_{2,i} = & \Hm_i \xv_{1,i} + \Hm_i \xv_{2,i} +\wv_{2,i} \, .
\label{sig_Rx_BC}
\end{align}
Each receiver has its own CSIR, but not global CSIR. The transmitter
power constraint for the two signals is $\Ex [||\xv_1||^2] \leq n
\alpha \Px $ and $\Ex [||\xv_2||^2] \leq n (1-\alpha) \Px $, with
$\alpha \in [0,1]$ and $n$ represents the time duration of each codeword. The
noise terms $\wv_1, \wv_2$ are zero mean i.i.d. circularly-symmetric
complex Gaussian with variances~$P_{w_1}, P_{w_2}$, respectively.
A set of achievable rates for this channel under lattice coding and decoding are as follows
\begin{theorem}
For the two-user broadcast channel given in~\eqref{sig_Rx_BC}, lattice coding and decoding achieve 
\begin{align}
R_1 < &  \,  \log \det \Big( \Ix_M  +  \frac{\alpha \Px}{M}  \Gm^H  \boldsymbol{\Phi}^{-1} \Gm \Big)    
\label{R_lat_BC_1}   \\
R_2 < &  \Big ( -  \log \det \Big( \frac{1}{1-\alpha} \Ex \big[ \big( \Ix_M + \frac{\Px}{M P_{w_2}} \Hm^H \Hm  \big)^{-1}  \big] \Big) \Big ) ^+, 
\label{R_lat_BC_2}
\end{align} 
where $\boldsymbol{\Phi} \triangleq  \frac{(1-\alpha) \Px}{M} \Gm \Gm^H + P_{w_1} \Ix_{N_1}$.
\label{theorem:BC_lat}
\end{theorem}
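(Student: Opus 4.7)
The plan is to split the analysis by receiver and to lean on Theorem~\ref{theorem:ptp_lat} for Receiver~2. The transmitter forms two independent dithered codewords from the nested-lattice construction of Section~\ref{sec:lattice}: $\xv_1$ is an ordinary dithered lattice codeword with per-dimension power $\alpha\Px/M$, while $\xv_2$ is produced by the DPC-like lattice construction of Theorem~\ref{theorem:ptp_lat} with $\xv_1$ playing the role of the non-causally known dirt. This is a legitimate use of Theorem~\ref{theorem:ptp_lat} because the transmitter generates $\xv_1$ before (or alongside) $\xv_2$ and therefore has it available at encoding time.

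For Receiver~2 the received signal matches~\eqref{sig_Rx} with desired signal $\xv_2$, dirt $\sv=\xv_1$, noise $\wv_2$, and respective per-element powers $(1-\alpha)\Px/M$, $\alpha\Px/M$, $P_{w_2}$. I would invoke Theorem~\ref{theorem:ptp_lat} directly. A crucial observation that enables this invocation is that the proof of Theorem~\ref{theorem:ptp_lat} uses only the second-order statistics of the dirt (through $\boldsymbol{\bar{\cov}}$ and the concentration argument in Lemma~\ref{lemma:z_dist}), so it is immaterial that $\xv_1$ is a lattice codeword rather than Gaussian. Substituting $\Px\to(1-\alpha)\Px$, $M\Ps\to\alpha\Px$ and $\Pw\to P_{w_2}$ in \eqref{R_lat} then collapses the first term in the bracket to $(1-\alpha)\Ix_M$ and factors out $(1-\alpha)$, yielding exactly \eqref{R_lat_BC_2}.

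For Receiver~1 the channel $\Gm$ is constant over the block and is available to the decoder via CSIR, so the problem reduces to a single-state MIMO channel with additive disturbance $\Gm\xv_2+\wv_1$. The dither on $\xv_2$ (Lemma~\ref{lemma:uniform}) makes it uniform on $\voronoi$ and, by Lemma~\ref{lemma:x_distribution} together with quantization-goodness, asymptotically white with per-dimension power $(1-\alpha)\Px/M$, so this disturbance has covariance $\boldsymbol{\Phi}$. Receiver~1 applies the MMSE-type equaliser $\Um_{1}=\frac{\alpha\Px}{M}\Gm^H\big(\frac{\alpha\Px}{M}\Gm\Gm^H+\boldsymbol{\Phi}\big)^{-1}$, removes its own dither, and decodes via the Euclidean lattice rule~\eqref{ML_eqn_simple}. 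Re-running the ambiguity-decoder bound from the proof of Theorem~\ref{theorem:ptp_lat} with the deterministic channel $\Gm$ replacing the random $\Hm$ and with the above covariance in place of $\boldsymbol{\bar{\cov}}$ collapses the rate exponent to $\log\det(\Ix_M+\frac{\alpha\Px}{M}\Gm^H\boldsymbol{\Phi}^{-1}\Gm)$, which is~\eqref{R_lat_BC_1}.

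The main obstacle I anticipate is rigorously justifying at Receiver~1 that the lattice-shaped interference $\Gm\xv_2$ can be handled as though it were Gaussian. My plan is to rely on the same Loeliger-style ambiguity-decoder bound~\cite{Loeliger} invoked in the proof of Theorem~\ref{theorem:ptp_lat}: that bound depends on the disturbance only through its second moment, so a uniform-over-$\voronoi$ interferer of the correct covariance is just as admissible as a Gaussian one of the same covariance, while the covering- and quantization-goodness of $\lat$ ensures that the sphere-bound slack in the analogue of $\genvoronoi$ still vanishes as $n\to\infty$. Everything else is a direct specialisation of calculations already carried out in Theorem~\ref{theorem:ptp_lat}.
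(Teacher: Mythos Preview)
Your approach is essentially the paper's: superposition coding, Receiver~1 treats $\xv_2$ as noise while Receiver~2 is handled by a direct invocation of Theorem~\ref{theorem:ptp_lat} with $\xv_1$ as dirt, and the substitution you give for~\eqref{R_lat_BC_2} is exactly right.

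There is one technical point you skate over for Receiver~1. In Theorem~\ref{theorem:ptp_lat} the spherical decision region~\eqref{voronoi_MIMO} works only because isotropy of $\Hm$ forces $\boldsymbol{\bar{\cov}}$ to be a scaled identity, so that trace and determinant coincide up to dimension. With a fixed $\Gm$ the post-equalisation noise covariance $\frac{\alpha\Px}{M}\big(\Ix_M+\frac{\alpha\Px}{M}\Gm^H\boldsymbol{\Phi}^{-1}\Gm\big)^{-1}$ is generally \emph{not} a scaled identity; plugging this covariance into the spherical region of Theorem~\ref{theorem:ptp_lat} would yield a trace-based bound rather than the desired $\log\det$. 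The paper handles this by switching to an \emph{ellipsoidal} decision region $\{\vv:\vv^T\boldsymbol{\cov}_1^{-1}\vv\le(1+\epsilon)n\}$ matched to that covariance, which is what actually produces~\eqref{R_lat_BC_1}. Your phrase ``with the above covariance in place of $\boldsymbol{\bar{\cov}}$'' hints at the right idea but does not make the required change of region (equivalently, a noise-whitening step before Euclidean decoding) explicit; you should state it.
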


\begin{proof}
\textit{Receiver~$1$:} The transmitter emits a superposition of two codewords, i.e., $\xv = \xv_1 + \xv_2$, where Receiver~$1$ decodes $\xv_1$ while treating~$\xv_2$ as noise. Hence, with respect to Receiver~$1$, the channel is a special case of the dirty paper channel with~$\Ps=0$,  and colored noise given by $\Gm \xv_{2,i} +\wv_{1,i} $. The equalization matrix is then time invariant, given by
\begin{equation}
\Um_1 = \frac{\alpha \Px}{M} \big( \frac{\Px}{M} \Gm \Gm^H + \Ix_{N_1} \big)^{-1} \Gm \, . 
\label{U_MSE_BC}
\end{equation}
Since the channel is fixed, we use an ellipsoidal decision region, given by 
\begin{equation}
\genvoronoi \triangleq \Big\{ \vv \in \real^{M n} \, : \, \vv^T \boldsymbol{\cov}_1^{-1} \vv \leq  (1+\epsilon)  n   \Big\}, 
\label{voronoi_BC}
\end{equation} 
where~$\boldsymbol{\cov}_1$ is an $Mn \times Mn$ block-diagonal matrix whose diagonal blocks are equal, and given by
\begin{equation}
\boldsymbol{\cov}_{1,i} \triangleq  \frac{\alpha \Px}{M} \big( \Ix_{M} + \frac{\alpha \Px}{M} \Gm^H \boldsymbol{\Phi}^{-1} \Gm  \big)^{-1}.
\label{cov_BC}
\end{equation} 
Following in the footsteps of the proof of Theorem~\ref{theorem:ptp_MIMO}, it can be shown that~$R_1$ satisfies~\eqref{R_lat_BC_1}.

\textit{Receiver~$2$:} Since~$\xv_1$ is known non-causally at the transmitter, the channel between the transmitter and Receiver~$2$ is equivalent to an ergodic fading dirty paper channel with $\Ps= \alpha \Px$, where  $R_2$ is given by~\eqref{R_lat_BC_2} from Theorem~\ref{theorem:ptp_lat}. The remainder of the rate region is obtained by varying~$\alpha$. 
\end{proof}

In the absence of CSIT, the capacity of the fading MIMO BC remains unknown. In Fig.~\ref{fig:rate_MIMO_BC_Rayleigh} we compare the rate region of the lattice coding scheme to the time-sharing inner bound as well as a version of Costa's DPC under non-causal CSIT and white-input covariance. In this non-causal scheme, Receiver~$1$ decodes its message while treating $\xv_2$ as noise. Since $\{ \Hm_i \xv_{1,i} \}_{i=1}^n$ are known non-causally at the transmitter, DPC totally removes the interference at Receiver~$2$. The rate region is then given by\footnote{It is unknown whether the rate region in~\eqref{C_lat_BC} is an outer bound for the capacity region of the channel in~\eqref{sig_Rx_BC}.}
\begin{align}
\check{R}_1 < &  \,  \log \det \Big( \Ix_M  +  \frac{\alpha \Px}{M}  \Gm^H  \boldsymbol{\Phi}^{-1} \Gm \Big)    
\nonumber   \\
\check{R}_2 < &  \,  \Ex \Big[  \log \det \big( \Ix_{M} + \frac{(1-\alpha) \Px}{M P_{w_2}} \Hm^H \Hm  \big)  \Big].
\label{C_lat_BC}
\end{align} 
In Fig.~\ref{fig:rate_MIMO_BC_Rayleigh} we compute the rates through Monte-Carlo simulations when $M=N_1=2, N_2=4$, and the channel of Receiver~$2$ is Rayleigh faded.\footnote{Jafar and Goldsmith~\cite{Goldsmith_BC_CSIR} showed that increasing~$M$ does not increase the capacity of the broadcast channel with isotropic fading and CSIR. Hence, we focus in our simulations on cases where $M \leq \min \{ N_1,N_2 \}$.} 
For the special case of single-antenna nodes, the rates of the lattice coding scheme are plotted in Fig.~\ref{fig:rate_SISO_BC_Nakagami} and compared with the time sharing inner bound as well as the Tuninetti-Shamai rate region for the two-user fading BC~\cite{Shamai_BC_C}.\footnote{The authors of~\cite{Shamai_BC_C} conjecture that their inner bound is tight.}
The results are also compared with the white-input BC capacity  with CSIT~\cite{BC_ergodic}. For the single-antenna case we assume the channel of Receiver~1 has unit gain, i.e.,~$|g|=1$. Note that unlike both~\cite{Shamai_BC_C,BC_ergodic}, the proposed lattice scheme presumes each receiver is oblivious to the codebook designed for the other receiver.

\begin{figure}
\centering
\includegraphics[width=\Figwidth]{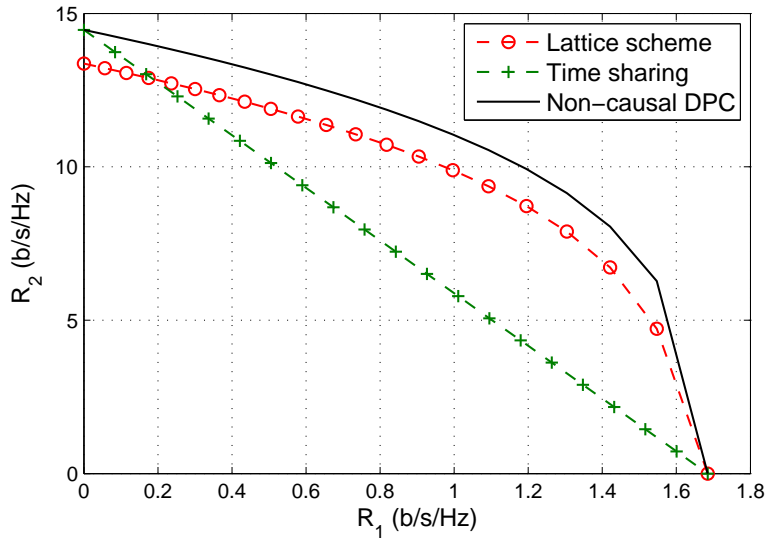}
\caption {Rate regions for the BC with $M=N_1=2$ and $N_2=4$, where~$\frac{\Px}{P_{w_1}}= 0 \, \text{dB}$ and~$\frac{\Px}{P_{w_2}}= 20 \, \text{dB}$. Fading of second user is Rayleigh.} 
\label{fig:rate_MIMO_BC_Rayleigh} 
\end{figure}

\begin{figure}
\centering
\includegraphics[width=\Figwidth]{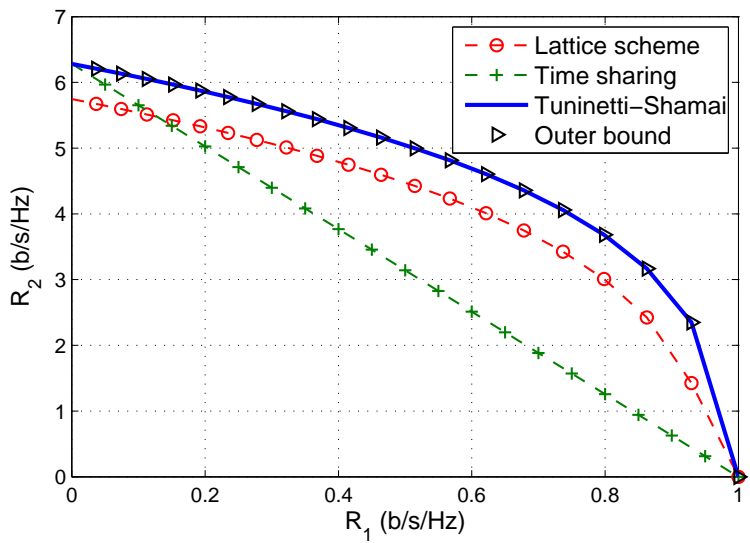}
\caption {Rate regions for the single-antenna BC, where~$\frac{\Px}{P_{w_1}}= 0 \, \text{dB}$ and~$\frac{\Px}{P_{w_2}}= 20 \, \text{dB}$. Fading of second user is Nakagami-$m$ with $m=2$.} 
\label{fig:rate_SISO_BC_Nakagami} 
\end{figure}

In addition, we study the two-user broadcast channel with CSIR, where the fading processes of the two users are stationary, ergodic and independent of each other, as follows
\begin{align}
\yv_{1,i} = & \Hm_{1,i} \xv_{1,i} + \Hm_{1,i} \xv_{2,i} +\wv_{1,i}    \nonumber \\
\yv_{2,i} = & \Hm_{2,i} \xv_{1,i} + \Hm_{2,i} \xv_{2,i} +\wv_{2,i}.
\label{sig_Rx_BC_F}
\end{align}

\begin{theorem}
For the two-user broadcast channel given in~\eqref{sig_Rx_BC_F}, lattice coding and decoding achieve
\begin{align}
R_1 < &  \Big ( -  \log \det \Big( \Ex \big[ \big( \Ix_M + \frac{ \alpha \Px}{M P_{w_1}} \Hm_1^H  \boldsymbol{\Phi}^{-1} \Hm_1  \big)^{-1}  \big] \Big) \Big ) ^+,   
\label{R_lat_BC_F_1}   \\
R_2 < &  \Big ( -  \log \det \Big( \frac{1}{1-\alpha} \Ex \big[ \big( \Ix_M + \frac{\Px}{M P_{w_2}} \Hm_2^H \Hm_2  \big)^{-1}  \big] \Big) \Big ) ^+, 
\label{R_lat_BC_F_2}
\end{align} 
where $\boldsymbol{\Phi} \triangleq  \frac{(1-\alpha) \Px}{M} \Hm_1 \Hm_1^H + P_{w_1} \Ix_{N_1}$.
\label{theorem:BC_lat_F}
\end{theorem}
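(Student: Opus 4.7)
The plan is to mirror the superposition approach of Theorem~\ref{theorem:BC_lat}: the transmitter emits $\xv = \xv_1 + \xv_2$, where $\xv_1$ and $\xv_2$ are produced by independently dithered nested-lattice codebooks of coarse-lattice second moments $\alpha \Px$ and $(1-\alpha)\Px$, respectively. The two receivers are analyzed separately. Receiver~$2$ requires no new work: since $\xv_1$ is known non-causally at the transmitter, the channel it sees is an ergodic fading MIMO dirty paper channel with dirt power $\Ps = \alpha\Px$ over $\Hm_2$, and Theorem~\ref{theorem:ptp_lat} immediately delivers~\eqref{R_lat_BC_F_2}.

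The novelty lies in Receiver~$1$. By Lemma~\ref{lemma:uniform}, $\xv_2$ is independent of $\xv_1$ and uniform over its Voronoi region; Lemma~\ref{lemma:x_distribution} then gives asymptotically isotropic per-symbol covariance $\tfrac{(1-\alpha)\Px}{M}\Ix_M$ for $\xv_2$. Receiver~$1$ therefore sees the effective channel $\yv_{1,i} = \Hm_{1,i}\xv_{1,i} + \tilde{\wv}_{1,i}$ with colored interference-plus-noise $\tilde{\wv}_{1,i} \triangleq \Hm_{1,i}\xv_{2,i} + \wv_{1,i}$ of conditional covariance $\boldsymbol{\Phi}_i \triangleq \tfrac{(1-\alpha)\Px}{M}\Hm_{1,i}\Hm_{1,i}^H + P_{w_1}\Ix_{N_1}$. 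I would replicate the proof of Theorem~\ref{theorem:ptp_lat} for $\Ps = 0$, but replace the equalizer by the MMSE filter matched to $\boldsymbol{\Phi}_i$,
\begin{equation*}
\Um_{1,i} = \tfrac{\alpha\Px}{M}\bigl(\tfrac{\Px}{M}\Hm_{1,i}\Hm_{1,i}^H + P_{w_1}\Ix_{N_1}\bigr)^{-1}\Hm_{1,i},
\end{equation*}
after which dither removal leaves an equivalent noise $\zv_{1,i}$ whose conditional covariance is the MMSE error matrix; via the matrix inversion lemma this reduces to a function of $\Hm_{1,i}$ alone whose inverse matches the argument of the determinant in~\eqref{R_lat_BC_F_1}.

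The remaining steps track the proof of Theorem~\ref{theorem:ptp_lat} verbatim. Stationarity and ergodicity of $\{\Hm_{1,i}\}$, together with finiteness of the per-symbol MMSE error, give the concentration $\|\zv_1\|^2/n \to \text{tr}(\boldsymbol{\bar{\cov}}_1)$ almost surely, with $\boldsymbol{\bar{\cov}}_1 \triangleq \Ex[\boldsymbol{\cov}_{1,i}]$. This allows the spherical ambiguity decoder of Lemma~\ref{lemma:z_dist}, the Minkowski-Hlawka averaging over the random fine lattice, and the covering-goodness of the coarse lattice to be invoked exactly as in~\eqref{eq:exponential_MIMO}--\eqref{xi_MIMO}, extracting the rate~\eqref{R_lat_BC_F_1}. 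The complex-valued extension and the sweep over $\alpha\in[0,1]$ carry over from Theorems~\ref{theorem:ptp_lat} and~\ref{theorem:BC_lat} without modification.

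The main obstacle I anticipate is justifying the ergodic concentration of $\|\zv_1\|^2/n$ when the disturbance $\tilde{\wv}_{1,i}$ is neither Gaussian nor conditionally i.i.d.\ across time. The crypto lemma removes coupling between $\xv_2$, $\xv_1$, and $\Hm_1$, while Lemma~\ref{lemma:x_distribution} makes the effective second-moment structure of $\tilde{\wv}_{1,i}$ match $\boldsymbol{\Phi}_i$ in the large-$n$ limit; together with the integrability of $\|\Hm_1\|^2$ that is implicit in the finiteness of the capacity outer bound, the law of large numbers applies to each cross-term in the expansion of $\zv_{1,i}^H\zv_{1,i}$ and closes the argument.
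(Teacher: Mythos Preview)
Your proposal is correct and follows essentially the same route as the paper: superposition $\xv=\xv_1+\xv_2$, Receiver~2 handled as an ergodic dirty paper channel with $\Ps=\alpha\Px$ via Theorem~\ref{theorem:ptp_lat}, and Receiver~1 handled by the time-varying equalizer $\Um_{1,i}$ you wrote (the paper gives the same filter, modulo a $P_{w_1}$ normalization) followed by the spherical ambiguity decoder and the rate extraction of Theorem~\ref{theorem:ptp_lat}. Your discussion of the concentration of $\|\zv_1\|^2/n$ when $\xv_2$ is non-Gaussian is more explicit than the paper's, which simply asserts that the remainder of the analysis mirrors Theorem~\ref{theorem:BC_lat}.
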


\begin{proof}
The achievability proof of the rate of Receiver~2 in~\eqref{R_lat_BC_F_2} is identical to that of Theorem~\ref{theorem:BC_lat}. At Receiver~1, the received signal is multiplied by a time-varying equalization matrix, given by
\begin{equation}
\Um_{1,i} = \frac{\alpha \Px}{M} \big( \frac{\Px}{M} \Hm_{1,i} \Hm_{1,i}^H + \Ix_{N_1} \big)^{-1} \Hm_{1,i} \, ,  
\label{U_MSE_BC_F}
\end{equation}
with spherical decision region as follows
\begin{equation}
\genvoronoi \triangleq  \Big\{ \vv \in \real^{M n} \, : \, ||\vv||^2 \leq  (1+\epsilon)  n  \frac{\alpha \Px}{M} 
 \text{tr} \Big( \Ex \Big [  \big( \Ix_M + \frac{\alpha \Px}{ M} \Hm_1^H  \boldsymbol{\Phi}^{-1}   \Hm_1  \big)^{-1}  \Big] \Big)   \Big\}. 
\label{voronoi_BC_F}
\end{equation} 
The remainder of the analysis resembles that in the proof of Theorem~\ref{theorem:BC_lat}, where it can be shown that~\eqref{R_lat_BC_F_1} is achievable.
\end{proof}

The rate region of the lattice scheme is plotted in Fig.~\ref{fig:rate_MIMO_BC_Nakagami_F} under Nakagami fading with $M=1$ and $N_1=N_2=2$, and compared with time sharing and DPC with non-causal CSIT.

\begin{figure}
\centering
\includegraphics[width=\Figwidth]{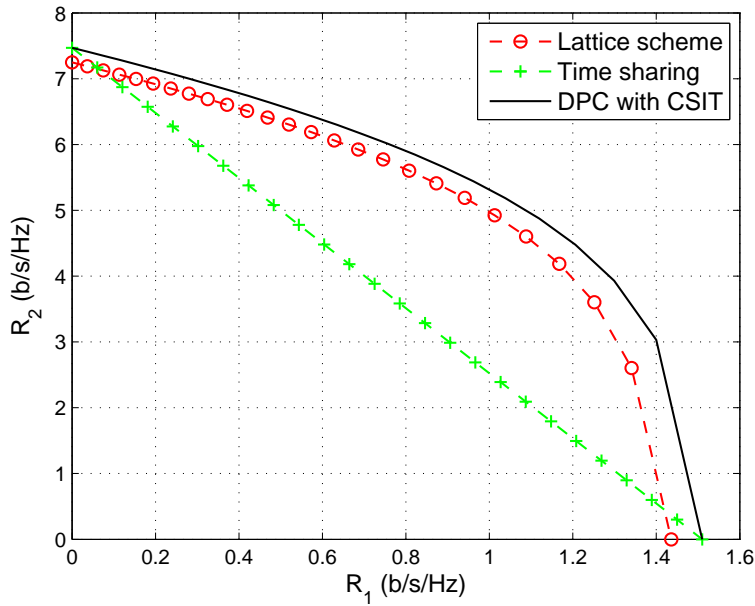}
\caption {Rate regions for the Nakagami-$m$ faded BC with $M=1$ and $N_1=N_2=2$, where~$\frac{\Px}{P_{w_1}}= 0 \, \text{dB}$ and~$\frac{\Px}{P_{w_2}}= 20 \, \text{dB}$. $m=2$.} 
\label{fig:rate_MIMO_BC_Nakagami_F} 
\end{figure}


\section{Conclusion} 
\label{sec:conc}

This paper studies the MIMO dirty paper channel in which the channel input and dirt experience stationary and ergodic fading with CSIR. It is shown that a variant of Costa's dirty paper coding achieves rates within $M$~bits of the capacity. Moreover, a lattice coding and decoding scheme is proposed that achieve rates within a constant gap to capacity for a wide range of fading distributions. More specifically, the gap to capacity diminishes as the number of receive antennas increases, even at finite SNR. The decision regions do not depend on the channel realizations, leading to simplifications.  The results imply that lattice coding and decoding  approach optimality for the fading dirty paper channel, and that the capacity of the fading dirty paper channel with CSIR approaches that of the point-to-point channel under some antenna configurations.  Moreover, the results are applied to MIMO broadcast channels under different fading scenarios and compared to capacity outer bounds. Simulations show that the proposed lattice coding scheme has near-capacity performance.


\appendices

\section{Proof of Lemma~\ref{lemma:z_dist}}
\label{appendix:z_dist}

We rewrite the noise expression~$\zv$ in~\eqref{eq:zi} in the form $\zv = \Am_d \xv + \Am_d \sv + \sqrt{\frac{\mu}{\Pw}} \Bm_d \wv$, where $\mu \triangleq \frac{\Px}{M}+\Ps$, and both $\Am_d$, $\Bm_d$ are block-diagonal matrices with diagonal blocks~$\Am_i,\Bm_i$, as follows
\begin{align}
\Am_i &\triangleq - \big ( \Ix_M + \frac{\mu}{\Pw} \Hm_i^T  \Hm_i  \big)^{-1},
\label{A_matrix} \\
\Bm_i & \triangleq \sqrt{\Pw} \mu \Hm_i^T ( \mu \Hm_i \Hm_i^T + \Pw \Ix_{N} )^{-1}.
\label{B_matrix}
\end{align}

Note that
\begin{equation}
\Am_d \Am_d^T + \Bm_d \Bm_d^T = ( \Ix_M + \frac{\mu}{\Pw} \Hm_d^T \Hm_d )^{-1}. 
\label{cov_sum}
\end{equation}
Since $\Hm_i$ is a stationary and ergodic process, $\Am_i$ and~$\Bm_i$ are stationary and ergodic processes as well. In the proceeding we omit the time index $i$ whenever it is clear from the context. Denote the ordered eigenvalues of the random matrix $\Hm^T \Hm$  by $\sigma_{H,1}^2, \ldots, \sigma_{H,M}^2$ (non-decreasing). Then the eigenvalue decomposition of~$\Hm^T \Hm$ is $\Hm^T \Hm \triangleq \Vm \Dm \Vm^T$, where $\Vm$ is a unitary matrix and $\Dm$ is a diagonal matrix whose unordered entries are $\sigma_{H,1}^2, \ldots, \sigma_{H,M}^2$. Owing to the isotropy of the distribution of~$\Hm$, $\Am \Am^T = \Vm (\Ix_M + \frac{\mu}{\Pw}  \Dm)^{-2} \Vm^T$ is unitarily invariant, i.e., $ \prob (\Am \Am^T) =  \prob (\check{\Vm} \Am \Am^T \check{\Vm}^T)$ for any unitary matrix~$\check{\Vm}$ independent of~$\Am$. As a result $\Vm$ is independent of~$\Dm$~\cite{random_matrix}. Hence,
\begin{align}
\Ex \big[ \Am \Am^T \big] & = \,  \Ex \big[ ( \Ix_M + \frac{\mu}{\Pw} \Hm^T \Hm  )^{-2}  \big]  \,
\twocolbreak
= \, \Ex \big[ \Vm (\Ix_M + \frac{\mu}{\Pw}  \Dm)^{-2} \Vm^T  \big]  \nonumber \\
& =  \Ex_{\Vm|\Dm} \big[ \Vm  \Ex_{\Dm} [ (\Ix_M + \frac{\mu}{\Pw}  \Dm )^{-2} ] \Vm^T  \big]  \,
\twocolbreak
= \, \Ex_{\Vm|\Dm} \big[ \Vm  \sigma_A^2 \Ix_M  \Vm^T  \big]  \,
= \,  \sigma_A^2 \Ix_M,
\label{A_cov_2}
\end{align}
where $\sigma_A^2 \triangleq \Ex_j \big[ \Ex_{\sigma_{\Hm,j}} [ \frac {1}{(1+ \frac{\mu}{\Pw} \sigma_{\Hm,j}^2)^2} ]   \big]$. Similarly, it can be shown that $\Ex \big[ \Bm \Bm^T \big] =  \sigma_B^2 \Ix_M$, where 
\begin{equation*}
\sigma_B^2 \triangleq \Ex_j \Big[ \Ex_{\sigma_{\Hm,j}} \big[ \frac {\frac{\mu}{\Pw} \sigma_{\Hm,j}^2}{(1+ \frac{\mu}{\Pw} \sigma_{\Hm,j}^2)^2} \big]   \Big].
\end{equation*}
 For convenience define $\sigma_z^2 \triangleq \sigma_A^2+\sigma_B^2$. Next, we  compute the autocorrelation of $\zv$ as follows
\begin{equation}
\boldsymbol{\cov_z} \triangleq  \Ex \big[  \zv \zv^T  \big] = \Ex \big[ \Am_d (\boldsymbol{\cov_x} + \Ps \Ix_{Mn}) \Am_d^T \big] + \mu \Ex \big[ \Bm_d  \Bm_d^T \big],
\label{cov_1}
\end{equation}	
where $\boldsymbol{\cov_x} \triangleq \Ex \big[ \xv \xv^T \big]$. Unfortunately, $\boldsymbol{\cov_x}$ is not known for all~$n$, yet it approaches $\frac{\Px}{M} \Ix_{M n}$ for large $n$, according to Lemma~\ref{lemma:x_distribution}. Hence one can rewrite 
\begin{align}
\boldsymbol{\cov_z} = & \, \underbrace { (\sigma_x^2 + \Ps) \Ex \big[ \Am_d \Am_d^T  +  \Bm_d  \Bm_d^T \big] }_{(\sigma_x^2 + \Ps) \, \sigma_z^2 \Ix_{M n}}  +
\twocolbreak
  \underbrace { \Ex \big[ \Am_d (\boldsymbol{\cov_x} - \sigma_x^2 \Ix_{M n}) \Am_d^T \big] +  \big( \frac{\Px}{M} - \sigma_x^2 \big) \Ex \big[ \Bm_d  \Bm_d^T \big] }_{\succ {\mathbf 0}}  \,  ,
\label{cov_2}
\end{align}	
where 
$\sigma_x^2 \triangleq \lambda_{\text{min}}(\boldsymbol{\cov_x}) - \delta$. 
It follows that $ \boldsymbol{\cov_z} \succ (\sigma_x^2 + \Ps) \, \sigma_z^2 \Ix_{M n} $, therefore
\begin{equation}
\boldsymbol{\cov_z}^{-1} \prec \frac{1}{(\sigma_x^2 + \Ps) \sigma_z^2} \Ix_{M n}.
\label{cov_3}
\end{equation}


To make noise calculations more tractable, we introduce a related noise variable that modifies the second term of $\zv$ as follows
\begin{equation}
\zv^* = \Am_d \xv + \Am_d \sv + \sqrt{\frac{\mu}{\Pw}} \Bm_d \wv + \sqrt{\frac{\frac{1}{M n} R_c^2 +\Ps}{\Pw} - \frac{\mu}{\Pw}} \, \Bm_d \wv^* ,
\label{z_MIMO_2}
\end{equation}
where $\wv^*$ is i.i.d. Gaussian with zero mean and unit variance, and $R_c$ is the covering radius of~$\voronoi$, and hence $\frac{1}{n} R_c^2 > \Px$. We now wish to bound the probability that $\zv^*$ is outside a sphere of radius $\sqrt{(1+\epsilon) M n ( \sigma_x^2 + \Ps) \sigma_z^2}$ for some~$\epsilon$ that vanishes with~$n$. First, we rewrite 
\begin{align}
||\zv^*||^2 = & \, \frac{\frac{1}{M n} R_c^2 +\Ps}{\Pw} \, \wv^T \Bm_d^T \Bm_d \wv  + \sv^T \Am_d^T \Am_d \sv 
\twocolbreak
+ 2 \sqrt{\frac{\frac{1}{M n} R_c^2 +\Ps}{\Pw}} \, \sv^T \Am_d^T \Bm_d \wv +  \xv^T \Am_d^T \Am_d \xv     \nonumber \\
 &  +  2 \sqrt{\frac{\frac{1}{M n} R_c^2 +\Ps}{\Pw}} \, \xv^T \Am_d^T \Bm_d \wv + 2 \xv^T \Am_d^T \Am_d \sv.
\label{z_norm}
\end{align}
We now bound the probability of deviation each of the terms on the
right hand side of~\eqref{z_norm} from its mean using the law of large
numbers. 
To begin with, the first term in~\eqref{z_norm} is the sum of $n$ terms of an ergodic sequence, where $\Ex [ \wv^T \Bm_d^T \Bm_d \wv ] = \text{tr} \big( \Ex [\Bm_d \wv \wv^T \Bm_d^T ] \big) = Mn \Pw \sigma_B^2$. Hence for any $\epsilon , \gamma \in (0,1)$ there exists sufficiently large $n$ so that
\begin{align}
 \prob & \Big( \frac{\frac{1}{M n} R_c^2 +\Ps}{\Pw} \, \wv^T \Bm_d^T \Bm_d \wv > ( R_c^2 + M n \Ps) \sigma_B^2 + M n \epsilon \Big) 
\twocolbreak
<\gamma,
\label{z_norm_w} 
\end{align}
Similarly,
\begin{equation}
\prob \big( \sv^T \Am_d^T \Am_d \sv > M n \Ps \sigma_A^2 + M n \epsilon \big) < \gamma,
\label{z_norm_s}
\end{equation}
and
\begin{equation}
\prob \Big( 2 \sqrt{\frac{\frac{1}{M n} R_c^2 +\Ps}{\Pw}} \, \sv^T \Am_d^T \Bm_d \wv > M n \epsilon \Big) < \gamma.
\label{z_norm_sw}
\end{equation}
The next term in~\eqref{z_norm} involves $\Am_d^T \Am_d$, a block-diagonal matrix with $\Ex \big[ \Am_d^T \Am_d \big] = \sigma_A^2 \Ix_{M n}$. Considering $\boldsymbol{\cov_x} \to \SNR \Ix_{M n}$ as $n \to \infty$, it can be shown using~\cite[Theorem 1]{weighted_avg} that $\frac{1}{||\xv||^2} \xv^T \Am_d^T \Am_d \xv \to \sigma_A^2$. More precisely,
\begin{equation}
 \prob \big(  \xv^T \Am_d^T \Am_d \xv   >  \sigma_A^2 ||\xv||^2  + M n \epsilon \big) < \gamma. 
\label{z_norm_xx}
\end{equation}
Since~$||\xv||^2 < R_c^2$, \eqref{z_norm_xx} implies
\begin{equation}
 \prob \big( \xv^T \Am_d^T \Am_d \xv  >  \sigma_A^2 R_c^2 + M n \epsilon \big) < \gamma.
\label{z_norm_x}
\end{equation}
The penultimate term in~\eqref{z_norm} can be bounded as follows\footnote{This term can be expressed as the sum of zero-mean and uncorrelated random variables to which the law of large numbers apply~\cite{LLN}.}
\begin{equation}
\prob \Big( 2 \sqrt{\frac{\frac{1}{M n} R_c^2 +\Ps}{\Pw}} \, \xv^T \Am_d^T \Bm_d \wv > M n \epsilon \Big) < \gamma,
\label{z_norm_xw}
\end{equation}
where the elements of~$\xv$ are also bounded. Similarly for the final term in~\eqref{z_norm},
\begin{equation}
\prob \big( 2 \xv^T \Am_d^T \Am_d \sv > M n \epsilon \big) < \gamma,
\label{z_norm_xs}
\end{equation}
In \eqref{z_norm_w} through \eqref{z_norm_xs}, $\epsilon,\gamma$ can
be made arbitrarily small by increasing~$n$. Moreover, for any fixed
$\epsilon, \gamma$ there is a sufficiently large $n$ so that
simultaneously all the above bounds are satisfied, because their number is finite and for each one a sufficiently large $n$ exists.

We now produce a union bound on all the terms above
\begin{equation}
\prob \big( ||\zv^*||^2 > (1+\epsilon')  ( R_c^2 + M n \Ps) \sigma_z^2  \big) < 6\gamma,  
\label{z_norm_final_1}
\end{equation}
where $\epsilon' \triangleq \frac{6\epsilon}{( R_c^2 + M n \Ps) \sigma_z^2}$. For sufficiently large~$n$, we can find $\frac{1}{M n}R_c^2 \leq (1+\epsilon) \frac{\Px}{M}$ for covering-good lattices and $\frac{\Px}{M} \leq (1+\epsilon) \sigma_x^2$ according to Lemma~\ref{lemma:x_distribution}. Then, take $\epsilon'' \triangleq (1+\epsilon)^2-1$, and any $\epsilon''' \leq (1+\epsilon')(1+\epsilon'') -1$, we have
\begin{align}
& \prob \big( \zv^{*T} \boldsymbol{\cov_z}^{-1} \zv^* > (1+\epsilon''')  M n   \big) 
\nonumber \\
< &  \prob \big( ||\zv^*||^2 > (1+\epsilon''')  n (M \sigma_x^2 + M \Ps) \sigma_z^2  \big)  
\label{z_norm_final_2} \\
< &  \prob \big( ||\zv^*||^2 > (1+\epsilon'''')  n (\Px + M \Ps) \sigma_z^2  \big)  
\nonumber \\
= &  \prob \Big( ||\zv^*||^2   
> (1+\epsilon'''')  \, \text{tr} \big( \Ex \big [ \big( \frac{1}{\frac{\Px}{M}+\Ps} \Ix_{Mn} + \frac{1}{\Pw} \Hm_d^T \Hm_d \big)^{-1} \big ] \big)    \Big)   
\label{z_norm_final_3} \\
= &  \prob \Big( ||\zv^*||^2   
> (1+\epsilon'''') n \, \text{tr} \big( \Ex \big [ \big( \frac{1}{\frac{\Px}{M}+\Ps} \Ix_M + \frac{1}{\Pw} \Hm^T \Hm \big)^{-1} \big ] \big)    \Big)  \nonumber \\
< & ~  6\gamma,  \nonumber
\end{align}
where~\eqref{z_norm_final_2} holds from~\eqref{cov_3} and~\eqref{z_norm_final_3} holds since $\Ex \big[ (\Ix_{M n} + \SNR \Hm_d^T \Hm_d)^{-1}  \big] = \sigma_z^2 \Ix_{M n}$, according to~\eqref{cov_sum}. The final step is to show that $||\zv^*|| \to ||\zv||$ as $n \to \infty$, where $\zv^*-\zv= \sqrt{\frac{\frac{1}{M n} R_c^2 +\Ps}{\Pw} - \frac{\mu}{\Pw}} \, \Bm_d \wv^*$. From the structure of~$\Bm_d$, the norm of each of its rows is less than~$M$, and hence the variance of each of the elements of~$\Bm_d \wv^*$ is no more than~$M$. Since $\lim_{n \to \infty} \frac{1}{M n} R_c^2 = \frac{\Px}{M}$ for a covering-good lattice, it can be shown using Chebyshev's inequality~\cite{kobayashi} that the elements of $\sqrt{\frac{\frac{1}{M n} R_c^2 +\Ps}{\Pw} - \frac{\mu}{\Pw}} \, \Bm_d \wv^*$ vanish and $|z_i^*-z_i| \to 0$ as $n \to \infty$ for all $i \in \{ 1, \ldots, n \}$, as follows
\begin{align}
\prob \big(|z_i^*-z_i| \geq \gamma^* \, \kappa \big) & \leq \frac{1}{\kappa^2}, ~~~~~~ for~all~ \kappa > 0,
\nonumber \\
\prob \big(|z_i^*-z_i| \geq \sqrt{\gamma^*} \,  \big) & \leq \gamma^* \,, 
\label{chebyshev}
\end{align}
where $\gamma^* \triangleq \sqrt{M \, \big( \frac{\frac{1}{M n} R_c^2 +\Ps}{\Pw} - \frac{\mu}{\Pw} \big)}$ vanishes with~$n$ and~\eqref{chebyshev} follows when $\kappa =\frac{1}{\sqrt{\gamma^*}}$.  This concludes the proof of Lemma~\ref{lemma:z_dist}.


\section{Proof of Corollary~\ref{cor:gap}}
\label{appendix:gap}

\subsection{$N \geq M$ and $\Ex \big[ (\Hm^H \Hm)^{-1} \big] < \infty$}
\label{sec:gap_NtNr_eq}

\begin{align}
\gap = & \, C-R  \nonumber  \\
= &  \Ex \big[ \log \det (  \Ix_{M} + \SNR \Hm^H \Hm ) \big] 
\twocolbreak
-  \Big ( - \log \det \Big( \Ex \big[ ( \frac{\Px}{\Px+M \Ps} \Ix_{M} + \SNR \Hm^H \Hm)^{-1} \big] \Big) \Big ) ^+  \nonumber \\
\leq &  \Ex \big[ \log \det (  \Ix_{M} + \SNR \Hm^H \Hm ) \big] 
\twocolbreak
+  \log \det \Big( \Ex \big[ ( \frac{\Px}{\Px+M \Ps} \Ix_{M} + \SNR \Hm^H \Hm)^{-1} \big] \Big) \nonumber \\
\leq & \log \det \Big( \Ix_{M} + \SNR \Ex [ \Hm^H \Hm ] \Big)  
\twocolbreak
+  \log \det \Big( \Ex \big[ ( \frac{\Px}{\Px+M \Ps} \Ix_{M} + \SNR \Hm^H \Hm)^{-1} \big] \Big) 
\label{gap_A_Jen} \\
< &  \log \det \Big( \Ix_{M} + \SNR \Ex [ \Hm^H \Hm ] \Big) 
+  \log \det \Big( \Ex \big[ ( \SNR \Hm^H \Hm)^{-1} \big] \Big) 
\label{gap_A_det} \\
= & \log \det \Big(  \big( \frac{1}{\SNR} \Ix_{M} + \Ex [ \Hm^H \Hm ] \big)  \Ex \big [ (\Hm^H \Hm)^{-1} \big ] \Big)   \nonumber \\ 
\leq & \log \det \Big(  \big( \Ix_{M} + \Ex [ \Hm^H \Hm ] \big)  \Ex \big [ (\Hm^H \Hm)^{-1} \big ] \Big),
\label{gap_A_power}
\end{align}
where~\eqref{gap_A_Jen},\eqref{gap_A_det} follow since $\log \det (\Am)$ is a concave and non-decreasing function over the set of all positive definite matrices~\cite{convex}. \eqref{gap_A_power} follows since $\SNR \geq 1$.


\subsection{$N > M$ and $\Hm$ is Gaussian}
\label{sec:gap_NtNr_uneq}

\begin{lemma}~\cite[Section~V]{Wishart}
For an i.i.d. complex Gaussian~$N \times M$ matrix~$\Gm$ whose elements have zero mean, unit variance and $N > M$, then $\Ex \big[ (\Gm^H \Gm)^{-1} \big] = \frac{1}{N-M} \, \Ix_N$. \quad \QEDB
\label{lemma:wishart}
\end{lemma}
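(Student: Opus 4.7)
The plan is to establish this in two moves: first exploit the unitary invariance of the Gaussian ensemble to reduce $\Ex[(\Gm^H\Gm)^{-1}]$ to a scalar multiple of the identity, and then compute that scalar by looking at a single diagonal entry via the Schur complement. Note that since $\Gm^H\Gm \in \comp^{M \times M}$, the claim should be read with $\Ix_M$ (not $\Ix_N$) on the right-hand side.

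For the isotropy step, the i.i.d.\ $\mathcal{CN}(0,1)$ entries of $\Gm$ make its distribution right-unitarily invariant: $\Gm\Vm \stackrel{d}{=} \Gm$ for every $\Vm \in U(M)$. Therefore $\Vm^H \Gm^H\Gm \Vm \stackrel{d}{=} \Gm^H\Gm$, and taking inverses and expectations gives $\Vm^H \Ex[(\Gm^H\Gm)^{-1}] \Vm = \Ex[(\Gm^H\Gm)^{-1}]$ for every unitary $\Vm$. A Hermitian matrix that commutes with every unitary is a scalar multiple of the identity, so $\Ex[(\Gm^H\Gm)^{-1}] = c\,\Ix_M$ for some $c \ge 0$.

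To pin down $c$, I would extract the $(1,1)$ entry by partitioning $\Gm = [\gv_1 \ \Gm_{-1}]$, where $\gv_1 \in \comp^N$ is the first column. The block-matrix / Schur-complement identity yields
\[
[(\Gm^H\Gm)^{-1}]_{1,1} = \big(\gv_1^H (\Ix_N - P_{\Gm_{-1}}) \gv_1\big)^{-1},
\]
with $P_{\Gm_{-1}}$ the orthogonal projector onto the column span of $\Gm_{-1}$. Since $\gv_1$ is independent of $\Gm_{-1}$ and isotropic, conditional on $\Gm_{-1}$ the quadratic form is a sum of $N-M+1$ i.i.d.\ $|\mathcal{CN}(0,1)|^2$ variables (the projector has rank $N-M+1$ almost surely). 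In other words, $2\,\gv_1^H(\Ix_N - P_{\Gm_{-1}})\gv_1 \sim \chi^2_{2(N-M+1)}$ unconditionally.

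The closing computation uses the standard moment $\Ex[1/Y] = 1/(k-2)$ for $Y \sim \chi^2_k$, valid when $k > 2$. With $k = 2(N-M+1)$ this yields $\Ex\big[[(\Gm^H\Gm)^{-1}]_{1,1}\big] = 1/(N-M)$, and by the isotropy reduction this is exactly $c$. The main (mild) obstacle is moment finiteness: the chi-squared inverse moment exists only for $k > 2$, which translates precisely into the hypothesis $N > M$; at $N = M$ the matrix is invertible almost surely but $\Ex[(\Gm^H\Gm)^{-1}]$ diverges. An alternative route would be to quote the known mean of the inverse complex Wishart distribution directly, but the Schur-complement argument above is self-contained and conceptually clean.
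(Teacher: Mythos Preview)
Your argument is correct, including the observation that the identity on the right-hand side must be $\Ix_M$ rather than $\Ix_N$. The paper, however, does not prove this lemma at all: it is stated as a citation to an external reference and marked with a \QEDB\ immediately after the statement, with no accompanying argument. So there is nothing to compare against in terms of method.

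That said, your self-contained proof is a welcome addition. The two-step strategy---unitary invariance to force a scalar multiple of $\Ix_M$, then a Schur-complement reduction of the $(1,1)$ entry to the inverse of a $\chi^2_{2(N-M+1)}$ variable---is the standard and clean way to derive this inverse-Wishart mean without invoking the full inverse-Wishart density. Your remark that the moment condition $k>2$ is precisely the hypothesis $N>M$ is also on point and worth keeping.
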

It follows
\begin{align}
\gap  & <  \, \log \det \Big(  \big( \Ix_{M} + \Ex [ \Hm^H \Hm ] \big)  \Ex \big [ (\Hm^H \Hm)^{-1} \big ] \Big)  
\label{gap_B_A}  \\
& =  \log \det \big(  (1+N) \, \frac{1}{N-M} \Ix_{M}  \big) 
\label{gap_B_Wishart}  \\
& = \, M \log \big ( 1 + \frac{M+1}{N - M} \big ) , \nonumber
\end{align}
where~\eqref{gap_B_A},\eqref{gap_B_Wishart} follow from~\eqref{gap_A_power} and Lemma~\ref{lemma:wishart}, respectively.


\subsection{$N=M=1$ and $|h|$ is  Nakagami-$m$ with $m>1$}
\label{sec:gap_SNRnak}

The Nakagami-$m$ distribution with~$m>1$ satisfies the condition~$\Ex \big[ \frac{1}{|h|^2} \big] < \infty $, and hence $\gap$ is a special case of~\eqref{gap_A_power}. When $\Ex [|h|^2 ]=1$ and $\SNR \geq 1$, then
\begin{align*}
\Ex \big[ \frac{1}{|h|^2} \big] = &   
\frac{2m^m}{\Gamma(m)} \,  \int_{0}^{\infty} {\frac{1}{x^2} \, x^{2m-1} e^{-mx^2} dx }   \\
 =&   \frac{2m^m}{\Gamma(m)} \, \frac{1}{2 m^{m-1}} \, \int_{0}^{\infty} { y^{m-2} e^{-y} dy }   \\
 =&   m \,  \frac{\Gamma(m-1)}{\Gamma(m)}   \, = \,  1 + \frac{1}{m-1} \, .
\end{align*}
 Hence, $\gap < \log \big(  ( 1 + \Ex [ |h|^2 ] )  \Ex \big [ \frac{1}{|h|^2} \big ] \big) = 1+ \log \big (  1+ \frac{1}{m-1} \big )$.

The case where $\SNR <1$ is trivial, since\,%
\footnote{The result in~\eqref{gap_low_SNR} holds for any fading distribution with~$\Ex \big[ |h|^2 \big]=1$.}
\begin{equation}
\gap < C = \Ex \big[\log (1 + \SNR |h|^2) \big] \leq \log \big(1 + \SNR \Ex[|h|^2] \big) < 1,
\label{gap_low_SNR}
\end{equation}
and hence~$\gap <1+ \log \big (  1+ \frac{1}{m-1} \big )$ is universal for all~$\SNR$.


\subsection{$N=M=1$ and $|h|$ is Rayleigh}
\label{sec:gap_SNRgaus}

\begin{lemma} \cite[Section~5.1]{Handbook_math}
For any~$z>0$, 
\begin{equation*}
\bar{E}_1(z) \triangleq \int_{z}^{\infty}{ \frac{e^{-t}}{t} dt} < \frac{ e^{-z} }{\log e}  \, \log(1+\frac{1}{z}).
\qquad \qquad \qquad \QEDB
\end{equation*}
\label{lemma:expint}
\end{lemma}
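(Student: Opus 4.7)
The plan is to convert both sides of the target bound into integrals over $[0,1)$ that admit a pointwise comparison. Since $\log$ denotes base~$2$ throughout the paper, $1/\log e = \ln 2$ and $\log(1+1/z)/\log e = \ln(1+1/z)$, so the claim is equivalent to showing $\bar{E}_1(z) < e^{-z}\ln(1+1/z)$. The shift $t = z+u$ immediately yields
\begin{equation*}
\bar{E}_1(z) \;=\; e^{-z}\int_{0}^{\infty} \frac{e^{-u}}{z+u}\,du,
\end{equation*}
so it suffices to prove $\int_0^\infty e^{-u}/(z+u)\,du < \ln(1+1/z)$.

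The key step I would use is the change of variables $v = 1-e^{-u}$, which maps $[0,\infty)$ bijectively onto $[0,1)$ with $e^{-u}\,du = dv$ and $u = -\ln(1-v)$. This transforms the left-hand side into
\begin{equation*}
\int_{0}^{\infty} \frac{e^{-u}}{z+u}\,du \;=\; \int_{0}^{1} \frac{dv}{z-\ln(1-v)},
\end{equation*}
while the right-hand side can be written in the matching form
\begin{equation*}
\ln(1+1/z) \;=\; \int_{z}^{z+1}\frac{dt}{t} \;=\; \int_{0}^{1} \frac{dv}{z+v}.
\end{equation*}

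The comparison now follows from the elementary Taylor bound $-\ln(1-v) = \sum_{k\geq 1} v^k/k > v$ for $v\in(0,1)$, which gives $z-\ln(1-v) > z+v$ and hence $1/(z-\ln(1-v)) < 1/(z+v)$. Integrating over $[0,1)$ and then multiplying by $e^{-z}$ would close the argument.

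The main obstacle is identifying the decisive substitution $v = 1-e^{-u}$. The crude bound $1/t \leq 1/z$ on $[z,\infty)$ only yields $\bar{E}_1(z) \leq e^{-z}/z$, which is strictly weaker than the claim since $\ln(1+1/z) < 1/z$, so a change of variables that couples the exponential and algebraic factors in a single integrand is essential. Once the two integrals are placed on the common domain $[0,1)$, the convexity-type inequality $-\ln(1-v) > v$ does the rest.
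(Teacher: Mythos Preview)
Your argument is correct. The substitution $t=z+u$ reduces the claim to $\int_0^\infty e^{-u}/(z+u)\,du<\ln(1+1/z)$, and the change of variables $v=1-e^{-u}$ together with $\ln(1+1/z)=\int_0^1 dv/(z+v)$ places both sides on $[0,1)$. The pointwise comparison $-\ln(1-v)>v$ for $v\in(0,1)$ (with equality only at $v=0$, so the integral inequality is strict) then gives the result.

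As for the paper's treatment: the lemma is not proved there at all. It is simply quoted from the cited handbook (Abramowitz--Stegun, \S5.1) and closed with a \textsc{qed} marker. So your write-up supplies a self-contained derivation where the paper relies on an external reference; there is nothing further to compare.
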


When $|h|$ is a Rayleigh random variable, $|h|^2$ is exponentially distributed with~$\Ex [|h|^2 ]=1$. For the case where~$\SNR \geq 1$, 
\begin{align}
\gap \leq & \log \Big ( \big(1 + \SNR \Ex[|h|^2] \big) \Ex \big[ \frac{1}{\SNR |h|^2 + \frac{\SNR}{\frac{\Px}{\Pw}+\frac{\Ps}{\Pw}}} \big] \Big )  
\label{gap_D_A} \\
= & \log \Big ( (\frac{1}{\SNR} + 1 ) \, \Ex \big[ \frac{1}{ |h|^2 + \frac{1}{\frac{\Px}{\Pw}+\frac{\Ps}{\Pw}}} \big] \Big )  \nonumber  \\
\leq & 1+ \log \Big (  \Ex \big[ \frac{1}{ |h|^2 + \frac{1}{\frac{\Px}{\Pw}+\frac{\Ps}{\Pw}}} \big] \Big )   
\label{gap_D_power} \\ 
 \leq &  \, 1+ \log \Big (  \Ex \big[ \frac{1}{ |h|^2 +  \frac{1}{2 \kappa} } \big] \Big )     \nonumber  \\
= & 1+ \log \big (\int_{0}^{\infty}{\frac{e^{-x}}{x+\frac{1}{2 \kappa}}  dx} \big)  \nonumber \\ 
= & 1+ \log \big( e^{\frac{1}{2 \kappa}} \, \int_{\frac{1}{2 \kappa}}^{\infty}{\frac{e^{-y}}{y}  dy} \big)   \nonumber  \\
< & 0.48 + \log \big ( \log ( 1 + 2 \kappa ) \big )  
\label{gap_D_expint}  \\
  < & \, 1.48 + \log \big ( \log ( 1 + \kappa ) \big ) \, ,  
\label{gap_D_final}	
\end{align}
where~\eqref{gap_D_A} follows from~\eqref{gap_A_Jen}, \eqref{gap_D_power} follows since $\SNR \geq 1$ and \eqref{gap_D_expint} follows from Lemma~\ref{lemma:expint}. Recall $\kappa \triangleq \max \{ \frac{\Px}{\Pw} , \frac{\Ps}{\Pw} , 1 \}$. When~$\SNR <1$, the gap to capacity is within one bit, and hence \eqref{gap_D_final} is also an upper bound for the gap in this regime.


\bibliographystyle{IEEEtran}
\bibliography{IEEEabrv,References-v3}

\end{document}